\documentclass[11pt]{article}
\usepackage[a4paper,margin=1in]{geometry}
\usepackage{amsmath,amssymb,mathtools,amsthm}
\usepackage{braket}
\usepackage[T1]{fontenc}
\usepackage{lmodern,microtype,graphicx}
\usepackage[font=small,labelfont=bf]{caption}

\usepackage[hidelinks,hypertexnames=false]{hyperref}
\usepackage{booktabs}
\usepackage{array,placeins}
\newcommand{\dd}{\,\mathrm d}

\newcommand{\eps}{\epsilon}
\DeclareMathOperator{\Tr}{Tr}
\newtheorem{theorem}{Theorem}
\newtheorem{proposition}[theorem]{Proposition}
\newtheorem{corollary}[theorem]{Corollary}
\newtheorem{lemma}[theorem]{Lemma}

\newcommand{\KL}{D_{\mathrm{KL}}}
\hypersetup{pdftitle={Wigner entropy below vacuum: physical counterexamples and stability limits},pdfauthor={Zixuan He}}
\begin{document}
\begin{center}
{\Large\bfseries Wigner entropy below vacuum: physical counterexamples and stability limits}\par
\vspace{0.55em}
{\large Zixuan He}\par
University of Glasgow, Glasgow, United Kingdom\par
24 September 2026
\end{center}

\begin{abstract}
Positive Wigner functions can have less Shannon entropy than the vacuum, even arbitrarily close to the vacuum state. We construct physical counterexamples and identify the competition that controls their entropy: the relative entropy to the vacuum phase-space density can exceed twice the mean photon number. A two-level family exhibits a finite window in which coherence lowers entropy while preserving global Wigner positivity. A complementary construction repairs a remote negative tail with an exponentially small thermal admixture; an explicit mixing weight of $2\times10^{-21}$ preserves a rigorously established entropy decrease. Optimisation over all one-mode Wigner-nonnegative states gives the sharp low-energy scale $E^\gamma/[\ln(1/E)]^\beta$, with $\gamma\simeq0.7412$ and $\beta\simeq0.5861$. Because $\gamma<1$, tensor products can have vanishing total energy and trace distance from vacuum while their entropy deficit diverges. We derive the exact half-transmission threshold for universal Shannon-entropy recovery and connect it to companion results showing residual non-Gaussian structure. The counterexamples reveal distinct controls on entropy: coherence sets the local descent, mode number amplifies it, and loss restores the vacuum bound.
\end{abstract}

\section{Why positive phase-space densities can lie below vacuum}
\label{sec:central}

The vacuum is the reference state for optical noise, yet proximity to vacuum does not by itself control phase-space entropy. A physical state may have an everywhere positive Wigner function, arbitrarily little excitation and Shannon entropy below the vacuum value. Across a growing number of modes, the discrepancy becomes unbounded even as the complete state approaches vacuum in trace distance. The counterexamples developed here expose the mechanism behind this failure and determine its sharp single-mode energy scale.

We use $[\hat q,\hat p]=i$, natural logarithms and
\begin{equation}
W_\rho(q,p)=\frac1\pi\int_{\mathbb R}e^{2ipy}
\langle q-y|\rho|q+y\rangle\,\dd y,\qquad
W_0(q,p)=\pi^{-1}e^{-q^2-p^2}.
\label{eq:wigner-def}
\end{equation}
For $W_\rho\ge0$, let $h(W_\rho)=-\int W_\rho\ln W_\rho$.
The vacuum has $h_0=1+\ln\pi$. The conjecture of Van Herstraeten and Cerf proposed $h(W_\rho)\ge h_0$ for every physical Wigner-positive state \cite{VHC2021}. Positive results for Fock qubits, beam-splitter states and mixed-state subclasses made the physical constraint on these probability densities central to the question \cite{DiasPrata2023,QianGagatsos2024,VHC2025,QianGagatsos2026}.

An exact identity isolates the competing contributions. With complete mean photon number $E=\Tr(\hat N\rho)$,
\begin{equation}
\boxed{\ h(W_\rho)-h_0=2E-\KL(W_\rho\Vert W_0)\ },
\qquad
\KL(W_\rho\Vert W_0)=\int W_\rho\ln\frac{W_\rho}{W_0}.
\label{eq:central-balance}
\end{equation}
Indeed, $-\ln W_0=\ln\pi+q^2+p^2$ and $\int(q^2+p^2)W_\rho=1+2E$. Thus a counterexample requires a physical, nonnegative Wigner density whose relative entropy exceeds its excitation contribution. Finite energy makes both sides well defined (Appendix~\ref{app:sharp-statements}).

The balance admits two physical controls: reducing off-diagonal coherence at fixed populations, or adding a broad positive tail at exponentially small weight. Both preserve the density-operator constraints while allowing the relative-entropy term to exceed $2E$. Their low-energy limits reveal how entropy depends on coherence, complete excitation cost and mode number.

\section{A counterexample window at fixed energy}
\label{sec:coherence-main}

Consider the physical two-level family
\begin{equation}
\sigma_{n,t,\lambda}=
\frac{|0\rangle\langle0|+t^2|n\rangle\langle n|
+\lambda t(|0\rangle\langle n|+|n\rangle\langle0|)}
{1+t^2},\qquad 0\le\lambda<1.
\label{eq:main-doublet}
\end{equation}
Its energy $E=nt^2/(1+t^2)$ is independent of the coherence fraction $\lambda$.
For every fixed $n$ and $\lambda<1$, sufficiently small $t$ gives strict Wigner positivity throughout phase space, and
\begin{equation}
h(W_{\sigma_{n,t,\lambda}})-h_0
=(2n-2^n\lambda^2)t^2+O_{n,\lambda}(t^4).
\label{eq:main-coefficient}
\end{equation}
Theorem~\ref{thm:general-doublet} proves both the global positivity and the expansion. The coefficient compares an energy cost linear in Fock level with a coherence contribution growing as $2^n$. Its first negative quadratic direction occurs at $n=3$, for $\lambda>\sqrt3/2$.

At $n=3$, a fully explicit interval is available:
\begin{equation}
0<|t|\le\frac1{200},\quad\lambda=\frac9{10}
\quad\Longrightarrow\quad
W_\sigma\ge\frac{1/10-1801t^2}{1+t^2}W_0>0,\quad
h(W_\sigma)-h_0<-\frac{47}{100}t^2.
\label{eq:main-certified}
\end{equation}
These inequalities follow from polynomial bounds and require no entropy quadrature (Theorem~\ref{thm:explicit-doublet}). At fixed $t$, increasing $\lambda$ strengthens the relative-entropy term without changing the energy. Figure~\ref{fig:window} shows that the vacuum entropy is crossed before the Wigner function loses positivity.

\begin{figure}[!htbp]
\centering\includegraphics[width=\linewidth]{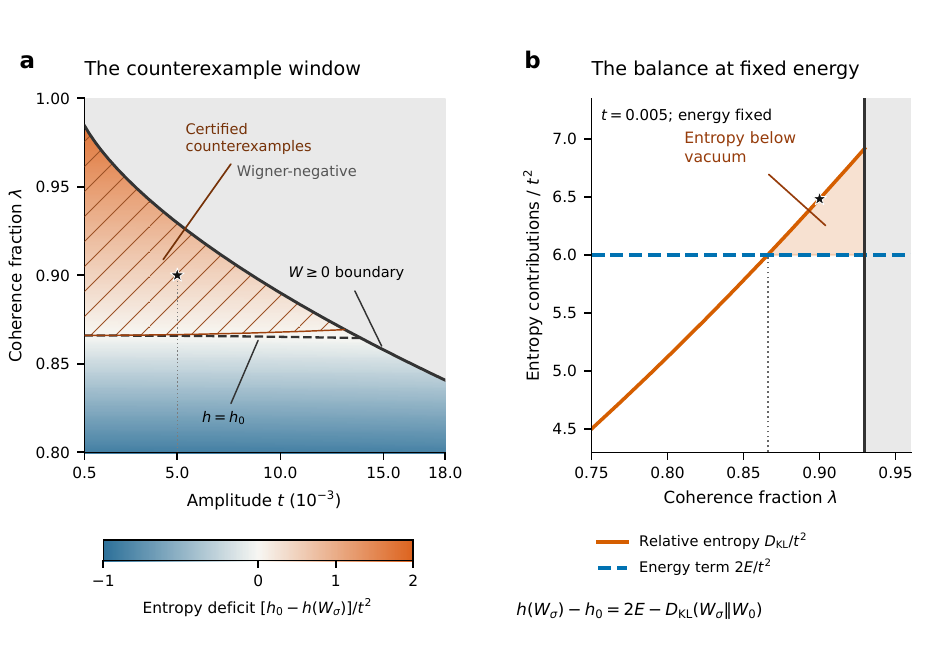}
\caption{\textbf{Coherence opens a window of positive-Wigner counterexamples.}
\textbf{a}, The vacuum--three-photon family of Eq.~\eqref{eq:main-doublet}.
The solid curve is the exact global positivity boundary $\lambda_{\rm pos}(t)$; grey denotes Wigner negativity somewhere in phase space. The dashed curve is the numerical entropy-zero contour. Colour gives $[h_0-h(W_\sigma)]/t^2$, with separate linear scales on the two sides of zero. Hatching marks the sufficient counterexample region proved by a polynomial entropy inequality in Appendix~\ref{app:window}. The star, $t=0.005$, $\lambda=0.9$, lies in the explicit interval~\eqref{eq:main-certified}.
\textbf{b}, At $t=0.005$, changing $\lambda$ leaves both populations and the energy fixed. The relative-entropy term crosses $2E$ at $\lambda\simeq0.86584$, while Wigner positivity persists up to $\lambda\simeq0.92959$. The shaded gap is the entropy decrease. Colour and the dashed zero contour are numerical evaluations; the hatched sufficient region and solid positivity boundary have analytic definitions.}
\label{fig:window}
\end{figure}

The parameter map follows from minimising a one-dimensional function:
\begin{equation}
\lambda_{\rm pos}(t)=\min_{x>0}
\frac{\sqrt3[1+t^2F(x)]}{4t x^{3/2}},
\qquad F(x)=-1+6x-6x^2+\frac43x^3.
\label{eq:main-window}
\end{equation}
For $0<t<1$, the minimum is attained at the unique positive root of
$t^2(4x^3-6x^2-6x+3)=3$.
For the plotted amplitudes, the sufficient entropy condition is
\begin{equation}
\lambda>\lambda_{\rm cert}(t):=
\left[\frac{6-110t^2+(86348/3)t^4}{8-488t^2}\right]^{1/2},
\qquad \lambda<\lambda_{\rm pos}(t).
\label{eq:main-window-cert}
\end{equation}
Appendix~\ref{app:window} derives these thresholds and the strict decrease of entropy with positive $\lambda$.

\section{A remote negative tail can be repaired at negligible mixing cost}
\label{sec:repair-main}

Thermal mixing gives a second route to the physical constraint. Set
\begin{equation}
|\psi_t\rangle=\frac{|0\rangle+t|3\rangle}{\sqrt{1+t^2}},
\qquad
\tau=\frac23\sum_{j=0}^{\infty}3^{-j}|j\rangle\langle j|,
\qquad
\rho^{\rm th}_{t,\epsilon}=(1-\epsilon)|\psi_t\rangle\langle\psi_t|+\epsilon\tau.
\label{eq:main-thermal}
\end{equation}
The thermal density $W_\tau=(2\pi)^{-1}e^{-x/2}$ decays more slowly than the Gaussian factor of the coherent component, where $x=q^2+p^2$. Its energy cost is included in
$E=3(1-\epsilon)t^2/(1+t^2)+\epsilon/2$.

Angular minimisation determines exactly how much repair is needed:
\begin{align}
M(t)&=\max_{x\ge0}\frac{2e^{-x/2}}{1+t^2}
\left[-1-t^2F(x)+\frac{4t}{\sqrt3}x^{3/2}\right]_+,\notag\\
\epsilon_*(t)&=\frac{M(t)}{1+M(t)},\qquad
\lim_{t\downarrow0}t^{2/3}\ln\epsilon_*(t)
=-\frac12\left(\frac34\right)^{1/3}.
\label{eq:main-repair}
\end{align}
The Wigner function is strictly positive for $\epsilon>\epsilon_*(t)$.
The negative region moves to squared radii of order $t^{-2/3}$; the broad thermal tail covers it at exponentially small weight (Proposition~\ref{prop:repair}).

For the concrete choice
\begin{equation}
t=\frac{\sqrt3}{2000},\qquad \epsilon_{\rm f}=2\times10^{-21},
\qquad \rho_{\rm f}=\rho^{\rm th}_{t,\epsilon_{\rm f}},
\label{eq:main-new-example}
\end{equation}
we obtain the global certificate
\begin{equation}
W_{\rho_{\rm f}}>\frac{79}{10000}W_0>0,
\qquad h(W_{\rho_{\rm f}})-h_0<0.
\label{eq:main-new-certified}
\end{equation}
Exact rational bounds prove positivity. Entropy concavity transfers the strict entropy decrease from the certified example in Appendix~\ref{app:fixed} to the smaller repair in Eq.~\eqref{eq:main-new-example}; the full argument is in Appendix~\ref{app:tight-repair}. Numerical integration gives an entropy difference of approximately $-1.5001440\times10^{-6}$ at complete energy $2.2499983\times10^{-6}$. The repair weight is about $2.21$ times the minimum required for this fixed core and this thermal density.

\begin{figure}[!htbp]
\centering\includegraphics[width=\linewidth]{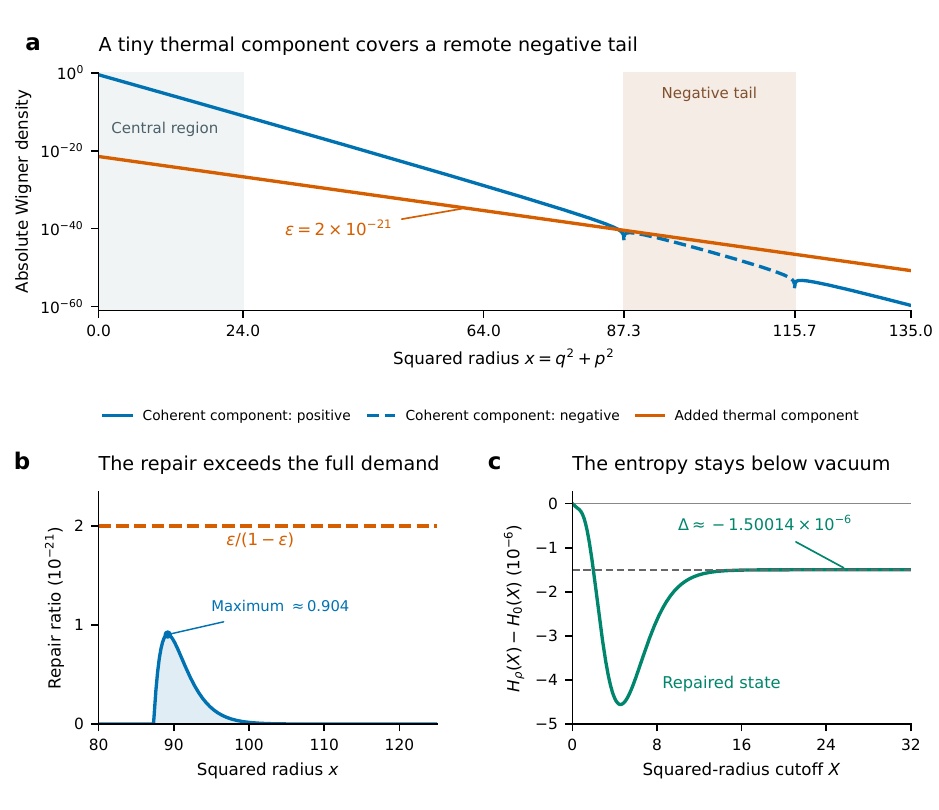}
\caption{\textbf{An extremely small thermal admixture repairs the tail while preserving entropy descent.}
The source is Eq.~\eqref{eq:main-new-example}.
\textbf{a}, Weighted Wigner components along a ray with $\cos3\theta=-1$. The logarithmic axis shows absolute magnitudes: solid blue is the positive coherent component and dashed blue its negative part. Orange is the added thermal density. The coherent component is negative only for $87.3116\ldots<x<115.6901\ldots$ and is dominated there by the thermal contribution.
\textbf{b}, The local demand $M_x=2e^{-x/2}[-1-t^2F(x)+(4t/\sqrt3)x^{3/2}]_+/(1+t^2)$ is everywhere below the available mixing odds $\epsilon_{\rm f}/(1-\epsilon_{\rm f})$. Its maximum is approximately $9.03978\times10^{-22}$.
\textbf{c}, The cumulative entropy difference of the actual repaired state, with
$H_\omega(X)=-\int_{q^2+p^2\le X}W_\omega\ln W_\omega\,\dd q\,\dd p$.
It tends to $\Delta=h(W_{\rho_{\rm f}})-h_0<0$. These are contributions to the full entropy integral, without renormalising the truncated density. Only the positive repaired state enters the entropy calculation. Appendix~\ref{app:tight-repair} supplies an all-plane positivity certificate and the analytic proof of the entropy sign.}
\label{fig:thermal}
\end{figure}

The same separation of central response and remote repair reveals a higher-order effect. For a vacuum--one-photon core,
$|\chi_t\rangle=(|0\rangle+t|1\rangle)/\sqrt{1+t^2}$, the weight
$\epsilon_t=4t^2\exp[-1/(4t^2)+1/(2t)]$ gives a strictly positive physical Wigner function and
\begin{equation}
h(W_{\omega_t})-h_0=-\frac43t^6+o(t^6),\qquad E_t=t^2+O(t^4).
\label{eq:main-one-photon}
\end{equation}
The thermal component occupies higher Fock levels. Theorem~\ref{thm:one-photon} proves this sixth-order descent, while Eq.~\eqref{eq:main-coefficient} characterises the separate limit with a fixed coherence fraction. Cubic coherence is the first quadratic instability of that fixed-fraction family; entropy descent also exists beyond the quadratic approximation.

\section{Sharp energy recovery fails to be uniform in dimension}
\label{sec:dimension}

The complete one-mode optimisation has a definite recovery law. Define
\begin{equation}
\mathcal D(E)=\sup_{\rho:\,W_\rho\ge0,\ \Tr(\hat N\rho)\le E}
[h_0-h(W_\rho)].
\label{eq:main-energy-sup}
\end{equation}
Theorem~\ref{thm:sharp-energy} proves
\begin{equation}
\boxed{\ \mathcal D(E)=\Theta\!\left(
\frac{E^\gamma}{[\ln(1/E)]^\beta}\right)\ },\qquad E\downarrow0,
\label{eq:main-sharp}
\end{equation}
where $\gamma\simeq0.7412033679$ and $\beta\simeq0.5861054961$ are specified exactly in Eqs.~\eqref{eq:critical-root}--\eqref{eq:sharp-constants}.
The upper bound covers every physical Wigner-nonnegative state with the stated energy, including infinite Fock support. A phase-averaged displaced squeezed repair gives the matching lower bound, with all displacement, squeezing and mixing energies counted.

The exponent reflects a competition between excitation and global positivity. Write $L=\ln(1/E)$. The upper-bound proof splits the vacuum--Fock coherences at a level $s$: energy controls the low-level contribution by $E2^s/s$, while positivity controls the remaining leading contribution by $s!/R^s$, with $R=L+\frac43\ln L$. Balancing their exponential rates at $s\sim r_*L$ gives
\begin{equation}
1-r_*\ln2=r_*(1-\ln r_*),\qquad
\gamma=1-r_*\ln2.
\label{eq:main-exponent-balance}
\end{equation}
The logarithmic exponent $\beta$ comes from the next terms: the factor $1/s$, the square-root factor in Stirling's formula and the logarithmic shift in $R$. Appendix~\ref{app:exponent-guide} works out this balance and shows how the physical repair attains it at
$n=\lfloor r_*L+c_*\ln L-K\rfloor$, with fixed $K$.
The deficit therefore tends to zero, but its ratio to $E$ diverges. There is no finite linear energy correction to the vacuum bound near zero excitation. The fixed one-photon construction has deficit of order $E^3$; the growing-level construction determines the optimal scale.

The exponent $\gamma<1$ has a direct many-mode consequence. Let
$\nu_m=|0\rangle\langle0|^{\otimes m}$,
$N_m=\sum_{j=1}^m\hat N_j$, and
\begin{equation}
\mathcal D_m(\rho)=mh_0-h(W_\rho),\quad
T(\rho,\nu_m)=\tfrac12\|\rho-\nu_m\|_1,\quad
d_{\mathcal F_m}(\rho)=\inf_{\sigma\in\mathcal F_m}\tfrac12\|\rho-\sigma\|_1,
\label{eq:multimode-definitions}
\end{equation}
where $\mathcal F_m$ is the trace-norm-closed convex hull of Gaussian states.
For every fixed $0<\zeta<(1-\gamma)/\gamma$, there are strictly Wigner-positive states with
\begin{align}
\Tr(N_m\rho_m)&\le m^{-\zeta}\longrightarrow0,\notag\\
d_{\mathcal F_m}(\rho_m)&\le T(\rho_m,\nu_m)\le m^{-\zeta/2}\longrightarrow0,\label{eq:dimension-distance}\\
\mathcal D_m(\rho_m)&\ge
C_1\frac{m^{1-\gamma(1+\zeta)}}{[(1+\zeta)\ln m]^\beta}
\longrightarrow+\infty.
\label{eq:dimension-deficit}
\end{align}

To see this, choose the one-mode attaining state $\sigma_{e_m}$ with
$e_m=m^{-1-\zeta}$ and set $\rho_m=\sigma_{e_m}^{\otimes m}$.
Energy and entropy add, giving the first and third bounds.
The operator inequality $N_m\ge I-\nu_m$ gives
$T(\rho_m,\nu_m)^2\le1-\langle0^{\otimes m}|\rho_m|0^{\otimes m}\rangle
\le\Tr(N_m\rho_m)$.
Finally $\nu_m\in\mathcal F_m$. This proves all three limits for the same states, directly from the one-mode theorem, as developed in the companion stability theory \cite{HeStability}.

\begin{figure}[!htbp]
\centering\includegraphics[width=\linewidth]{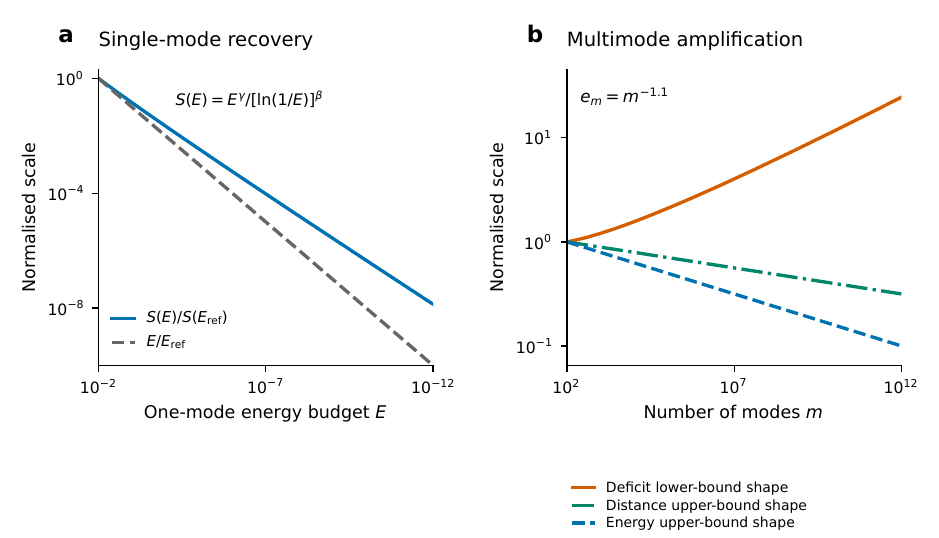}
\caption{\textbf{Single-mode entropy recovery becomes a dimensional instability.}
The curves show normalised analytic shape functions from Eq.~\eqref{eq:main-sharp} and Eqs.~\eqref{eq:dimension-distance}--\eqref{eq:dimension-deficit}.
\textbf{a}, The one-mode scale $E^\gamma/[\ln(1/E)]^\beta$ tends to zero more slowly than $E$. Each curve is divided by its value at $E_{\rm ref}=10^{-2}$.
\textbf{b}, With energy per mode $e_m=m^{-1.1}$, the upper bounds on total energy and distance from vacuum scale as $m^{-0.1}$ and $m^{-0.05}$; the lower bound on entropy deficit scales as $m^{0.184676\ldots}/(1.1\ln m)^\beta$. Each curve is divided by its value at $m_{\rm ref}=10^2$.
The curves show asymptotic dependence, with unspecified prefactors and finite onset scales.}
\label{fig:dimension}
\end{figure}

Every one-copy measurement event then has a probability difference from vacuum at most $T(\rho_m,\nu_m)$, while the entropy deficit grows without bound. The failure is a lack of dimension-independent continuity: closeness of the state and the accumulated phase-space entropy obey different scales.

\section{Loss restores the entropy bound before structure disappears}
\label{sec:loss-main}

Optical loss supplies a sharp second boundary. Let $\mathcal L_\eta$ denote pure loss of intensity transmissivity $\eta$, and let $\mathcal P_{m,E}$ contain Wigner-nonnegative states of complete input energy at most $E$. For every fixed $E>0$,
\begin{equation}
\boxed{\ 
\sup_{m\ge1}\ \sup_{\rho\in\mathcal P_{m,E}}
\mathcal D_m(\mathcal L_\eta^{\otimes m}\rho)=
\begin{cases}
0,&0\le\eta\le1/2,\\
+\infty,&1/2<\eta\le1 .
\end{cases}
\ }
\label{eq:main-loss}
\end{equation}
This threshold was established in Ref.~\cite{HeStability}; Appendix~\ref{app:loss} gives its complete short derivation from the present counterexamples and the Wehrl inequality.
At half transmission the output Wigner density is a rescaled Husimi density, whose entropy is bounded below by the vacuum value \cite{Wehrl}. Above half transmission, the one-mode coherence term becomes $\lambda^2(2\eta)^n$ while the energy cost remains linear in $n$. Choosing the level first and the amplitude second yields arbitrarily large multimode deficits within any fixed positive total-energy budget.

Entropy recovery does not force proximity to a Gaussian mixture. Wigner-positive states beyond the Gaussian convex hull, including their detection after loss, have an established literature \cite{FilipMista2011,Genoni2013}. A companion source family makes the separation quantitative at the recovery threshold \cite{HeStability,HeConcentration}. It consists of weak thermally repaired vacuum--three-photon modes with occupation $c/m$ and total input energy tending to $3c$. For $0<c<1$, its attenuated state $\widetilde\rho_{m,\eta}$ satisfies
\begin{equation}
\liminf_{m\to\infty}d_{\mathcal F_m}(\widetilde\rho_{m,\eta})
\ge c\eta^3e^{-c[1-(1-\eta)^3]},\qquad 0<\eta\le1 .
\label{eq:main-hull-gap}
\end{equation}
At $\eta=1/2$, the same state obeys the recovered entropy bound and retains a gap at least $(c/8)e^{-7c/8}>0$ (Fig.~\ref{fig:loss}). Appendix~\ref{app:residual} defines the source and reconstructs the positivity and witness proofs. To meet a prescribed input budget $E$, choose $c<\min(1,E/3)$. This source has bounded nonzero limiting energy; the sequence in Eqs.~\eqref{eq:dimension-distance}--\eqref{eq:dimension-deficit} instead approaches vacuum.

\begin{figure}[!htbp]
\centering\includegraphics[width=\linewidth]{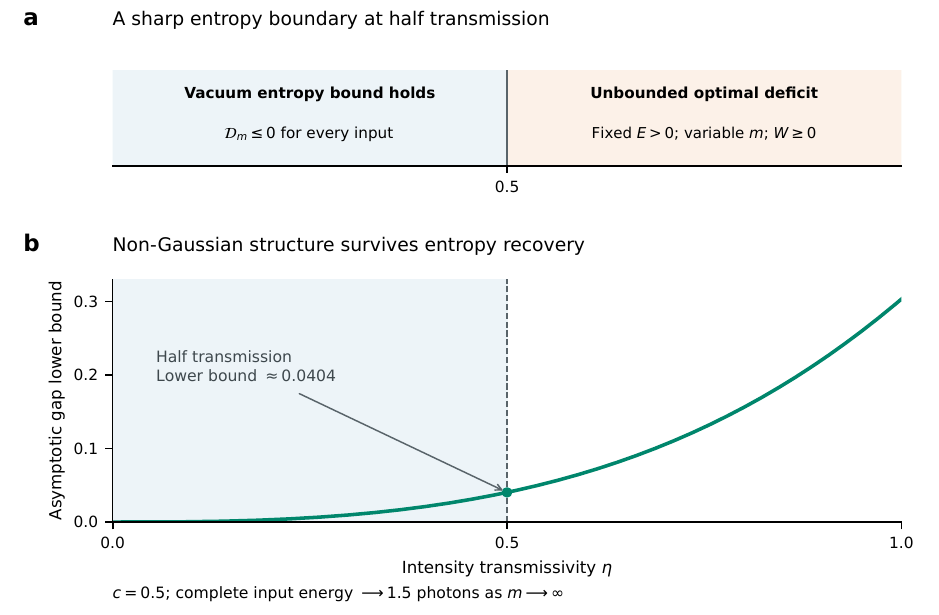}
\caption{\textbf{Entropy recovery and structural survival have different loss laws.}
\textbf{a}, Equation~\eqref{eq:main-loss}: for every input budget $E>0$, the maximal Shannon deficit over all finite mode numbers is zero at $\eta\le1/2$ and unbounded at each fixed $\eta>1/2$.
\textbf{b}, The asymptotic Gaussian-hull gap lower bound in Eq.~\eqref{eq:main-hull-gap}, for the separate source with $c=0.5$ and total input energy tending to $1.5$. At half transmission the bound is approximately $0.0404$. The curve is an asymptotic lower bound. Aligned loss axes show structural survival within the region of universal entropy recovery. Appendix~\ref{app:residual} gives the source and witness proof from Refs.~\cite{HeStability,HeConcentration}.}
\label{fig:loss}
\end{figure}

The residual structure also has a processing cost. Let a Gaussian protocol act on one complete copy of $\widetilde\rho_{m,\eta}$ and retain $k$ modes. The protocol may use Gaussian ancillary states and channels, adaptive Gaussian measurements, arbitrary measurable classical feedback, randomisation, discarding and postselection. At actual success probability at least a fixed $0<p_0\le1$, its normalised successful output satisfies
\begin{equation}
d_{\mathcal F_k}(\rho_{\rm out\mid succ})
\le C_{c,p_0}\left(\sqrt{\frac{k}{m}}+m^{-1/4}\right),
\qquad 1\le k\le m,
\label{eq:main-retention}
\end{equation}
for all sufficiently large $m$ \cite{HeConcentration,HeStability}. The constant is uniform over protocols; the comparison set includes Gaussian mixtures of unrestricted energy. Appendix~\ref{app:retention-interface} gives the precise interface and its transfer through loss. Consequently, $k=o(m)$ makes the output gap vanish. Retaining any fixed gap smaller than the lower bound in Eq.~\eqref{eq:main-hull-gap} requires $k$ proportional to $m$, and keeping the full state attains this order. The minimum retained mode count is therefore $\Theta(m)$ at fixed $p_0$.

\section{From counterexamples to stability questions}
\label{sec:discussion}

The counterexamples distinguish three notions that vacuum intuition can conflate. Positive Wigner density is a condition on an entire phase space. Small mean energy is a moment condition. Small distance from a Gaussian state is an operational closeness condition. Equations~\eqref{eq:central-balance} and \eqref{eq:dimension-deficit} show why these constraints can coexist with entropy below vacuum, and even with an unbounded total deficit in growing dimension.

Earlier work on complex Wigner entropy studied fully coherent binary Fock superpositions and their incoherent endpoints \cite{CHV2024}. The family in Eq.~\eqref{eq:main-doublet} resolves the intermediate-coherence window where the Wigner function is positive and its ordinary Shannon entropy is already below vacuum. The fixed-$\lambda$ one-photon coefficient agrees with the positive Fock-qubit result \cite{QianGagatsos2024}; the sixth-order construction uses a thermal component of infinite Fock support. The phase-averaged squeezed family in the sharp lower bound was previously considered in another role \cite{MKC2008v1}; its energy-dependent parameters and the uniform displacement estimate establish the matching entropy scale here.

The companion theory resolves further parts of the stability landscape \cite{HeStability}. Near R\'enyi order two, $\alpha=2-\delta$, and at fixed positive total energy, the first mode count attaining a fixed positive deficit is
$\Theta(\delta^{-1}\ln(1/\delta))$.
At fixed mode number the optimal deficit is of order $\delta/\ln(1/\delta)$; beyond a sufficiently large constant multiple of the crossover scale it is of order $m\delta$. The sharp crossover coefficient remains a joint physical optimisation problem. Under the additional uniform bound $\Tr(1024^{N_m}\rho)\le K$, the same work proves vacuum $L^\alpha$-norm recovery at $\eta=\alpha/2$ in a mode-independent interval near $\alpha=2$. A further exact result covers the full total-photon sector $N_m\le1$ at $\eta=\alpha/2$ for every $1<\alpha<2$, including all coherences and mixtures. The same companion identifies the remaining crossover loss through a joint probability experiment constructed from one physical state and one Gaussian frame. The unrestricted-input R\'enyi boundary and the sharp full-entropy crossover coefficient remain open.

Gaussian rigidity and certification address complementary uses of these structures. Symmetry controls which leading non-Gaussian sectors survive a Gaussian change of frame \cite{HeRigidity}. Certification of a separate low-energy source has a confidence-dependent hierarchy between memory-free local access, local quantum memory and spatially coherent measurements \cite{HeCertification}. These results connect entropy stability to state structure and measurement cost through their stated source families and operations.

A finite-source loss scan in Ref.~\cite{HeStability} provides a direct experimental-theory target: certify an entropy advantage at high transmission and a positive non-Gaussian witness after the Shannon bound has recovered. That model includes source and detector qualification, systematic errors, failed heralds and complete raw-trial budgets. The exceptionally small thermal weight in Fig.~\ref{fig:thermal} serves a different purpose: it isolates the positivity mechanism with a rigorous all-plane certificate.

The central result is a quantitative failure of vacuum entropy minimality. Coherence can lower entropy while preserving positivity; the optimal one-mode decrease vanishes sublinearly with energy; and a growing number of modes turns that recovery into an instability. Half loss restores the universal Shannon bound. The non-Gaussian structure that survives it is controlled by a separate operational theory.

\enlargethispage{2\baselineskip}
\section*{Acknowledgements}
I thank Professors Stephen M. Barnett, Daniel Mulvihill and Joerg Goette for their support of this work. I also thank Professor Barnett for suggesting the independent Wigner-function reconstruction using Eqs.~(4.5.19)--(4.5.21) of Ref.~\cite{BarnettRadmore}.

\FloatBarrier
\clearpage
\appendix
\renewcommand{\thefigure}{S\arabic{figure}}
\setcounter{figure}{0}
\section{A rationally certified thermal counterexample}\label{app:fixed}
For the fixed counterexample, set
\begin{equation}
t=\frac{\sqrt3}{2000},
\qquad
a=t^2=\frac{3}{4{,}000{,}000},
\qquad
\eps=10^{-12},
\qquad
c=\frac{1-\eps}{1+a},
\label{eq:parameters}
\end{equation}
and define
\begin{equation}
|\psi\rangle=\frac{|0\rangle+t|3\rangle}{\sqrt{1+a}},
\qquad
\tau=\frac23\sum_{n=0}^{\infty}\left(\frac13\right)^n|n\rangle\langle n|,
\qquad
\rho=(1-\eps)|\psi\rangle\langle\psi|+\eps\tau.
\label{eq:state}
\end{equation}
Both terms are density operators, hence $\rho\ge0$ and $\Tr\rho=1$. With $\hat N=(\hat q^2+\hat p^2-1)/2$,
\begin{equation}
\langle\hat N\rangle_\rho=3ac+\frac{\eps}{2}<\infty.
\label{eq:energy}
\end{equation}

For this fixed state, the result is
\begin{equation}
W_\rho(q,p)>\frac{1}{4\pi}e^{-(q^2+p^2)}>0
\qquad\text{for every }(q,p)\in\mathbb R^2,
\label{eq:positive-bound}
\end{equation}
and
\begin{equation}
h(W_\rho)-(1+\ln\pi)<0.
\label{eq:entropy-violation}
\end{equation}
This state violates the vacuum Shannon bound.

\subsection{Exact Wigner function}

Write
\begin{equation}
\alpha=\frac{q+ip}{\sqrt2},
\qquad
R=|\alpha|^2,
\qquad
x=q^2+p^2=2R.
\end{equation}
The phase-space density used by Barnett and Radmore is normalized with respect to $\dd^2\alpha$, while Eq.~\eqref{eq:wigner-def} is normalized with respect to $\dd q\,\dd p$; since $\dd^2\alpha=\dd q\,\dd p/2$, the two are related by $W_\rho=W_\alpha/2$.

Let $\Pi=(-1)^{\hat N}$ and $D(\alpha)=\exp(\alpha\hat a^\dagger-\alpha^*\hat a)$. Equation~(4.5.19) of Barnett and Radmore \cite{BarnettRadmore} gives
\begin{equation}
W_\alpha(\alpha)
=\frac{2}{\pi}\sum_{n\ge0}(-1)^n\langle n|D^\dagger(\alpha)\rho D(\alpha)|n\rangle
=\frac{2}{\pi}\Tr[\rho D(2\alpha)\Pi].
\label{eq:parity}
\end{equation}
Using the normal ordering of $D(2\alpha)$, the required $\{0,3\}$ matrix elements are
\begin{equation}
e^{2R}
\bigl(\langle j|D(2\alpha)\Pi|k\rangle\bigr)_{j,k\in\{0,3\}}
=
\begin{pmatrix}
1 & (2\alpha^*)^3/\sqrt6\\[2mm]
(2\alpha)^3/\sqrt6 & -L_3(4R)
\end{pmatrix}.
\label{eq:matrix-elements}
\end{equation}
The off-diagonal entries are complex conjugates. For the $\tau$ term, the Laguerre generating function yields
\begin{equation}
\frac23e^{-2R}\sum_{n\ge0}\left(-\frac13\right)^nL_n(4R)=\frac12e^{-R}.
\label{eq:tau-sum}
\end{equation}
Substitution into Eq.~\eqref{eq:parity} gives
\begin{equation}
W_\alpha(\alpha)
=\frac{2c}{\pi}e^{-2R}
\left[1-aL_3(4R)+\frac{8t}{\sqrt6}(\alpha^3+\alpha^{*3})\right]
+\frac{\eps}{\pi}e^{-R}.
\label{eq:walpha}
\end{equation}
Hence, in the $(q,p)$ convention,
\begin{equation}
W_\rho(q,p)
=\frac{e^{-x}}{\pi}
\left\{
c\left[1+aF(x)+\frac{q^3-3qp^2}{500}\right]
+\frac{\eps}{2}e^{x/2}
\right\},
\label{eq:wigner-final}
\end{equation}
where
\begin{equation}
F(x)=-L_3(2x)=-1+6x-6x^2+\frac43x^3.
\label{eq:F}
\end{equation}
Equivalently, with $q+ip=\sqrt{x}e^{i\theta}$,
\begin{equation}
W_\rho=\pi^{-1}e^{-x}g(x,\theta),
\qquad
g=A(x)+B(x)\cos3\theta,
\label{eq:gdef}
\end{equation}
where
\begin{equation}
A(x)=c[1+aF(x)]+\frac{\eps}{2}e^{x/2},
\qquad
B(x)=\frac{c}{500}x^{3/2}.
\label{eq:AB}
\end{equation}
Direct angular averaging gives
\begin{equation}
\int_{\mathbb R^2}W_\rho\,\dd q\,\dd p=1,
\qquad
\mathbb E_\rho[x]=1+6ac+\eps=1+2\langle\hat N\rangle_\rho.
\label{eq:normalization-moment}
\end{equation}
Appendix~\ref{app:second} reproduces Eq.~\eqref{eq:wigner-final} by a separate calculation from Eq.~(4.5.21) of Ref.~\cite{BarnettRadmore}.

\subsection{Global positivity}

Since $B(x)\ge0$, angular minimization gives $g\ge A-B$. Put $y=\sqrt{x}$. Replacing the positive exponential in $A$ by its Taylor polynomial through order $16$ in $x/2$ gives the global lower bound
\begin{equation}
\begin{split}
g(x,\theta)\ge P(y):={}&c\left[1-a+6ay^2-\frac{y^3}{500}-6ay^4+\frac{4a}{3}y^6\right]\\
&+\frac{\eps}{2}\sum_{k=0}^{16}\frac{y^{2k}}{2^k k!}.
\end{split}
\label{eq:Py}
\end{equation}
This is a degree-$32$ polynomial with rational coefficients. On each interval
\begin{equation}
I_k=[k/2,(k+1)/2],\qquad k=0,\ldots,15,
\end{equation}
write
\begin{equation}
P\!\left(\frac{k}{2}+\frac{z}{2}\right)=\sum_{j=0}^{32}d_jz^j,
\end{equation}
and convert to the degree-$32$ Bernstein basis. The Bernstein coefficients are
\begin{equation}
b_i=\sum_{j=0}^{i}d_j\frac{\binom{i}{j}}{\binom{32}{j}}.
\label{eq:bernstein}
\end{equation}
All quantities in this finite comparison are rational. Exact arithmetic gives
\begin{equation}
\min_{k,i}b_{k,i}>\frac{2747}{10000}>\frac14.
\label{eq:bernstein-bound}
\end{equation}
Since Bernstein basis functions are nonnegative and sum to one, $P>1/4$ on $0\le y\le8$. On the remaining half-line, the constant coefficient of $P(8+z)$ exceeds $1/4$ and every positive-power coefficient is positive. Thus $P>1/4$ also for $y\ge8$. Equations~\eqref{eq:gdef} and \eqref{eq:Py} therefore imply Eq.~\eqref{eq:positive-bound}.

The same estimates give an upper bound $W_\rho\le Ce^{-x/2}$ for a finite constant $C$. Together with Eq.~\eqref{eq:positive-bound} and Eq.~\eqref{eq:normalization-moment}, this implies
\begin{equation}
\int_{\mathbb R^2}W_\rho|\ln W_\rho|\,\dd q\,\dd p<\infty.
\label{eq:entropy-finite}
\end{equation}

\subsection{A finite rational entropy certificate}

For $A$ and $B$ in Eq.~\eqref{eq:AB}, define
\begin{equation}
d(x)=\sqrt{A(x)^2-B(x)^2},
\qquad
K(x)=\frac{A(x)+d(x)}{2},
\qquad
r(x)=\frac{B(x)}{A(x)+d(x)}.
\label{eq:dKr}
\end{equation}
Strict positivity implies $|r|<1$. The factorization
\begin{equation}
g=K(1+re^{3i\theta})(1+re^{-3i\theta})
\end{equation}
gives
\begin{equation}
\langle\ln g\rangle_\theta=\ln K,
\qquad
\langle\cos(3\theta)\ln g\rangle_\theta=r,
\end{equation}
and therefore
\begin{equation}
J(x):=\langle g\ln g\rangle_\theta
=A\ln\!\left(\frac{A+d}{2}\right)+A-d.
\label{eq:J}
\end{equation}
Using $\ln W_\rho=-\ln\pi-x+\ln g$ and Eq.~\eqref{eq:normalization-moment},
\begin{equation}
\Delta:=h(W_\rho)-(1+\ln\pi)
=6ac+\eps-\int_0^\infty e^{-x}J(x)\,\dd x.
\label{eq:delta}
\end{equation}

The strict sign can be certified with finite rational arithmetic. For every
$g>0$, Taylor's theorem and $(g\ln g)^{(4)}=2/g^3>0$ imply
\begin{equation}
 g\ln g\ge u+\frac{u^2}{2}-\frac{u^3}{6},\qquad u=g-1.
 \label{eq:thermal-cubic-certificate}
\end{equation}
Put $u_0=A-1$. Exact angular averaging gives
\[
 J(x)\ge P(x):=u_0+\frac{u_0^2}{2}-\frac{u_0^3}{6}
                   +\frac{B^2}{4}(1-u_0).
\]
On $[0,32]$, expand the displayed expression using
$u_0=c[1+aF(x)]-1+(\eps/2)e^{x/2}$ and
$B^2=c^2x^3/250000$. It is a finite polynomial in $x$ and $e^{x/2}$
with rational coefficients. After multiplication by $e^{-x}$, its
integrals are rational combinations of
\[
 I_n(s)=\int_0^{32}x^n e^{sx}\dd x,
 \qquad s\in\{-1,-1/2,0,1/2\},\quad 0\le n\le9.
\]
They are evaluated by the finite recurrence
\[
 I_0(s)=\frac{e^{32s}-1}{s},\qquad
 I_n(s)=\frac{32^ne^{32s}}s-\frac ns I_{n-1}(s)
 \quad(s\ne0),\qquad I_n(0)=\frac{32^{n+1}}{n+1}.
\]
All exponentials can be enclosed using a single rational interval. With
$S_{128}=\sum_{j=0}^{128}16^j/j!$,
\[
 S_{128}\le e^{16}\le S_{128}
       +\frac{16^{129}/129!}{1-16/130}.
\]
The upper bound is the geometric bound on the positive Taylor tail.
Take reciprocals and squares to enclose $e^{-16}$ and $e^{-32}$, and
propagate these rational intervals through the finite recurrence.
Finally, $g\ln g\ge-1/e$ gives
$\int_{32}^\infty e^{-x}J(x)\dd x\ge-e^{-32}$.
These steps give the fully rational comparison
\begin{equation}
 \boxed{\displaystyle
 \Delta\le6ac+\eps-\int_0^{32}e^{-x}P(x)\dd x+e^{-32}
 <-\frac{149}{10^8}<0.}
 \label{eq:delta-final}
\end{equation}
For reference, the upper expression is approximately
$-1.49978297\times10^{-6}$. Direct numerical integration of the exact
angular formula gives $\Delta\simeq-1.50014304\times10^{-6}$.
The strict sign follows from the rational comparison above.

\section{Finite-support counterexamples}
\label{sec:finite-support}

We next restrict the state to the two-dimensional subspace spanned by $|0\rangle$ and $|n\rangle$. Within this subspace, the off-diagonal coherence can be reduced independently of the two populations.

For an integer $n\ge1$, a real amplitude $t$, and a coherence fraction $0\le\lambda<1$, define
\begin{equation}
\sigma_{n,t,\lambda}=\frac{|0\rangle\langle0|+t^2|n\rangle\langle n|+\lambda t(|0\rangle\langle n|+|n\rangle\langle0|)}{1+t^2}.
\label{eq:doublet-state}
\end{equation}
The nonzero $2\times2$ block has determinant $t^2(1-\lambda^2)/(1+t^2)^2$, so $\sigma_{n,t,\lambda}$ is a physical rank-two state for $t\ne0$ and $\lambda<1$. Its mean photon number and impurity are
\begin{equation}
E_{n,t}:=\Tr(\hat N\sigma_{n,t,\lambda})=\frac{nt^2}{1+t^2},\qquad
1-\Tr\sigma_{n,t,\lambda}^2=\frac{2(1-\lambda^2)t^2}{(1+t^2)^2}.
\label{eq:doublet-energy}
\end{equation}
Let
\begin{equation}
F_n(x)=(-1)^nL_n(2x),\qquad Q_n(x)=\frac{(2x)^{n/2}}{\sqrt{n!}}.
\label{eq:FnQn}
\end{equation}
The parity representation used in Appendix~\ref{app:fixed} gives
\begin{equation}
W_{\sigma_{n,t,\lambda}}=W_0\,g_{n,t,\lambda},\qquad
g_{n,t,\lambda}(x,\theta)=\frac{1+t^2F_n(x)+2\lambda tQ_n(x)\cos(n\theta)}{1+t^2}.
\label{eq:doublet-wigner}
\end{equation}
The finite Laguerre expansion and $\int_0^\infty e^{-x}x^k\,\dd x=k!$ give
\begin{equation}
\int_0^\infty e^{-x}F_n(x)\,\dd x=1,\qquad
\int_0^\infty xe^{-x}F_n(x)\,\dd x=1+2n,
\label{eq:doublet-moments1}
\end{equation}
\begin{equation}
\int_0^\infty e^{-x}Q_n(x)^2\,\dd x=2^n.
\label{eq:doublet-moments2}
\end{equation}
Whenever $g_{n,t,\lambda}>0$, the entropy difference is therefore
\begin{equation}
\Delta_{n,t,\lambda}:=h(W_{\sigma_{n,t,\lambda}})-(1+\ln\pi)
=\frac{2nt^2}{1+t^2}-\int W_0g_{n,t,\lambda}\ln g_{n,t,\lambda}\,\dd q\,\dd p.
\label{eq:doublet-delta}
\end{equation}

\subsection{An explicit vacuum--three-photon family}

\begin{theorem}
\label{thm:explicit-doublet}
For $n=3$, $\lambda=9/10$, and $0<|t|\le1/200$,
\begin{equation}
W_{\sigma_{3,t,9/10}}(q,p)\ge
\frac{1/10-1801t^2}{1+t^2}W_0(q,p)>0
\label{eq:explicit-positive}
\end{equation}
for every $(q,p)$, and
\begin{equation}
h(W_{\sigma_{3,t,9/10}})-(1+\ln\pi)<-\frac{47}{100}t^2.
\label{eq:explicit-negative}
\end{equation}
\end{theorem}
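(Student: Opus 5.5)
The plan is to prove the two assertions separately. Positivity will come from an angular minimisation followed by a completed square. The entropy bound will come from the cubic Taylor minorant of $g\ln g$ already used in Eq.~\eqref{eq:thermal-cubic-certificate}, integrated exactly with the moments in Eqs.~\eqref{eq:doublet-moments1}--\eqref{eq:doublet-moments2}.

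\emph{Positivity.} By Eq.~\eqref{eq:doublet-wigner} with $n=3$, minimising over $\theta$ gives
\[
(1+t^2)\,g\ge 1+t^2F(x)-2\lambda|t|Q_3(x),\qquad Q_3^2=\tfrac43x^3 .
\]
The cubic term of $F$ is exactly $Q_3^2$, so I write $F=Q_3^2+G$ with $G(x)=-1+6x-6x^2$.

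I then split $t^2Q_3^2=\mu t^2Q_3^2+(1-\mu)t^2Q_3^2$ and complete the square on the second piece:
\[
(1-\mu)t^2Q_3^2-2\lambda|t|Q_3\ge-\frac{\lambda^2}{1-\mu}.
\]
With $\lambda=9/10$ and $\mu=1/10$ this contributes $1-\lambda^2/(1-\mu)=1/10$. What remains is $t^2\bigl[G(x)+\tfrac{2}{15}x^3\bigr]$.

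A one-variable calculus check bounds this bracket below. The critical point of $6x-6x^2+\tfrac2{15}x^3$ is at $x=15\pm\sqrt{210}$, and the minimum is about $-1622$. Hence the bracket is at least $-1801$. This can be confirmed rigorously, for instance by checking that $\tfrac2{15}x^3-6x^2+6x+1800\ge0$ on $x\ge0$ via its critical value.

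This gives Eq.~\eqref{eq:explicit-positive}. Moreover $1/10-1801t^2>0$ for $t^2\le1/40000$.

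\emph{Entropy.} Since $g>0$, I use $g\ln g\ge u+u^2/2-u^3/6$ with $u=g-1$. Write $D=1+t^2$. Then
\[
u=\frac{t^2(F-1)+2\lambda tQ_3\cos3\theta}{D}.
\]
By Eq.~\eqref{eq:doublet-moments1}, $\int W_0u=0$.

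For the quadratic term, Eq.~\eqref{eq:doublet-moments2} gives
\[
\tfrac12\int W_0u^2=\frac{\lambda^2 2^3t^2+\tfrac12t^4\int e^{-x}(F-1)^2}{D^2}.
\]
In the cubic term the odd powers of $\cos3\theta$ average out, leaving
\[
\frac{t^6\langle(F-1)^3\rangle+6\lambda^2t^4\langle(F-1)Q_3^2\rangle}{D^3},
\]
where $\langle\cdot\rangle=\int_0^\infty e^{-x}(\cdot)\,dx$. Every one of these integrals is a finite exact sum of factorials, since $\int_0^\infty e^{-x}x^k\,dx=k!$.

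Inserting these into Eq.~\eqref{eq:doublet-delta} gives
\[
\Delta\le\frac{6t^2}{D}-\frac{6.48\,t^2}{D^2}+\frac{C_4t^4+C_6t^6}{D^3}
\]
with explicit rational $C_4,C_6$. Here $C_4$ collects $-\tfrac12\langle(F-1)^2\rangle$ and $\lambda^2\langle(F-1)Q_3^2\rangle$, and $C_6$ collects $\langle(F-1)^3\rangle/6$. Only the positive parts can hurt the bound.

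The leading coefficient is $6-6.48=-0.48$. The claim $<-0.47\,t^2$ therefore leaves a margin of only $0.01\,t^2$. This margin must absorb three things: the $D$-denominators, which introduce corrections of relative order $t^2$ such as $6.48\,t^4$; $C_4t^4$; and $C_6t^6$.

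\emph{Main obstacle.} I expect the only delicate step to be this final bookkeeping, namely verifying that
\[
|C_4|+O(1)\lesssim 0.01/t^2=400\qquad(t^2\le1/40000),
\]
and similarly for $C_6t^4$. The worry is that $\langle(F-1)Q_3^2\rangle$ contains sixth-degree moments, of size $6!\cdot(4/3)^2$ and so in the hundreds to low thousands, and $\langle(F-1)^3\rangle$ contains $9!$-sized terms. The factor $t^2$ or $t^4$ should tame these, but only just for $C_4$.

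My first attempt is to compute these moments exactly and check the inequality at the worst case $t^2=1/40000$, using monotonicity in $t^2$ of the resulting rational function. If $C_4$ turns out too large, the fallback is to keep the exact sign of the $t^4$ contributions instead of absolute values, since $-\tfrac12\langle(F-1)^2\rangle$ is negative and may offset the positive cross-moment.
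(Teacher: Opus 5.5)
Your proposal is correct and matches the paper's proof essentially step for step. Positivity uses the same angular minimisation and square completion with weight $9/10$ on $t^2Q_3^2$, leaving $1/10+t^2[-1+6x-6x^2+\tfrac{2}{15}x^3]\ge 1/10-1801t^2$. The entropy bound uses the same cubic minorant of $g\ln g$ with the exact moments $244$, $173392$, $8$ and $496$, giving $\Delta\le 4a(539675a^2+5349a-9)/[75(1+a)^3]$ with $a=t^2$. Your fallback of keeping the sign of the $-122t^4$ term is in fact necessary, not optional. Discarding it turns the final polynomial into $-\tfrac{1}{100}+408.69a+\cdots$, which is positive at $a=1/40000$. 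The signed polynomial $-\tfrac{1}{100}+\tfrac{28669}{100}a+\tfrac{8635223}{300}a^2+\tfrac{47}{100}a^3$ is increasing and negative there, which gives the stated bound on the whole interval.
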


\begin{proof}
Put
\begin{equation}
F(x)=-1+6x-6x^2+\frac43x^3,\qquad Q(x)=\frac2{\sqrt3}x^{3/2}.
\end{equation}
Angular minimization followed by square completion, with $\kappa=9/10$, gives
\begin{align}
1+t^2F-2\lambda|t|Q
&=\left(\sqrt\kappa|t|Q-\frac{\lambda}{\sqrt\kappa}\right)^2
+1-\frac{\lambda^2}{\kappa}+t^2(F-\kappa Q^2),\\
F-\kappa Q^2&=-1+6x-6x^2+\frac2{15}x^3\ge-1801.
\end{align}
The last inequality follows because $(2/15)x^3-6x^2\ge-1800$ on $x\ge0$, with equality at $x=30$, while $6x\ge0$. Since $\lambda=\kappa=9/10$, the numerator of Eq.~\eqref{eq:doublet-wigner} is at least $1/10-1801t^2$, proving Eq.~\eqref{eq:explicit-positive} throughout the stated interval.

For the entropy, use the elementary inequality
\begin{equation}
(1+u)\ln(1+u)\ge u+\frac{u^2}{2}-\frac{u^3}{6},\qquad u>-1.
\label{eq:scalar-log-bound}
\end{equation}
Its remainder is nonnegative because the fourth derivative of $(1+u)\ln(1+u)$ is $2/(1+u)^3$. Let $a=t^2$ and $H=F-1$. Then
\begin{equation}
u:=g_{3,t,\lambda}-1=\frac{aH+2\lambda tQ\cos3\theta}{1+a},\qquad \int W_0u=0.
\end{equation}
Direct polynomial integration gives
\begin{equation}
\int W_0H^2=244,\qquad \int W_0H^3=173392,\qquad
\int W_0Q^2=8,\qquad \int W_0HQ^2=496.
\label{eq:explicit-moments}
\end{equation}
Using $\langle\cos3\theta\rangle=\langle\cos^33\theta\rangle=0$ and $\langle\cos^23\theta\rangle=1/2$,
\begin{align}
\int W_0u^2&=\frac{16\lambda^2a+244a^2}{(1+a)^2},\\
\int W_0u^3&=\frac{2976\lambda^2a^2+173392a^3}{(1+a)^3}.
\end{align}
Equations~\eqref{eq:doublet-delta} and \eqref{eq:scalar-log-bound}, with $\lambda=9/10$, yield
\begin{equation}
\Delta_{3,t,9/10}\le
\frac{4a(539675a^2+5349a-9)}{75(1+a)^3}.
\label{eq:explicit-rational-upper}
\end{equation}
To compare this with $-(47/100)a$, it is enough to check
\begin{equation}
-\frac1{100}+\frac{28669}{100}a+\frac{8635223}{300}a^2+\frac{47}{100}a^3<0.
\end{equation}
The left side is increasing for $a\ge0$ and is negative at the upper endpoint $a=1/40000$. This proves Eq.~\eqref{eq:explicit-negative}.
\end{proof}

For example, $t=1/1000$ gives the exact lower bound
\begin{equation}
W_{\sigma_{3,1/1000,9/10}}\ge\frac{98199}{1000001}W_0>0,
\end{equation}
and Eq.~\eqref{eq:explicit-rational-upper} gives
\begin{equation}
\Delta_{3,1/1000,9/10}< -4.79\times10^{-7}.
\end{equation}
\subsection{Entropy expansion at fixed Fock level}

\begin{theorem}
\label{thm:general-doublet}
Fix an integer $n\ge1$ and $0\le\lambda<1$. There is a $t_0=t_0(n,\lambda)>0$ such that $W_{\sigma_{n,t,\lambda}}>0$ everywhere for $|t|\le t_0$. Moreover,
\begin{equation}
\Delta_{n,t,\lambda}=(2n-2^n\lambda^2)t^2+O_{n,\lambda}(t^4),\qquad t\to0.
\label{eq:general-doublet-expansion}
\end{equation}
Consequently, for every $n\ge3$ and
\begin{equation}
\sqrt{\frac{n}{2^{n-1}}}<\lambda<1,
\label{eq:coherence-threshold}
\end{equation}
all sufficiently small nonzero $t$ give strictly Wigner-positive finite-support counterexamples.
\end{theorem}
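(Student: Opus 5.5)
The plan follows the proof of Theorem~\ref{thm:explicit-doublet}, with the explicit constants replaced by $n$- and $\lambda$-dependent ones.

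\emph{Positivity.} Minimise over angle in Eq.~\eqref{eq:doublet-wigner}. It suffices to show
\[
1+t^2F_n(x)-2\lambda|t|Q_n(x)>0\qquad\text{for all }x\ge0 .
\]
Choose $\kappa\in(\lambda,1)$, for instance $\kappa=(1+\lambda)/2$, and complete the square as before:
\[
1+t^2F_n-2\lambda|t|Q_n=\Bigl(\sqrt\kappa|t|Q_n-\tfrac{\lambda}{\sqrt\kappa}\Bigr)^2+1-\tfrac{\lambda^2}{\kappa}+t^2\bigl(F_n-\kappa Q_n^2\bigr).
\]
The polynomial $F_n(x)=(-1)^nL_n(2x)$ has leading term $(2x)^n/n!$. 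Meanwhile $Q_n^2=(2x)^n/n!$ exactly. Hence $F_n-\kappa Q_n^2$ has leading coefficient $(1-\kappa)2^n/n!>0$, so it is bounded below on $[0,\infty)$ by a constant $-C_{n,\kappa}$. The expression is then at least $1-\lambda^2/\kappa-C_{n,\kappa}t^2$. Since $1-\lambda^2/\kappa>0$, this is positive, indeed bounded below by a positive constant, for $|t|\le t_0$.

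This uniform lower bound $g\ge c_0>0$ is what the entropy step needs. The same argument, by the finite moments below, also makes $h$ finite.

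\emph{Expansion.} Set $u=g-1=\bigl(t^2(F_n-1)+2\lambda tQ_n\cos n\theta\bigr)/(1+t^2)$. By Eq.~\eqref{eq:doublet-moments1}, $\int W_0u=0$. Apply the two-sided Taylor bounds for $\phi(u)=(1+u)\ln(1+u)$:
\[
u+\tfrac{u^2}{2}-\tfrac{u^3}{6}\;\le\;\phi(u)\;\le\;u+\tfrac{u^2}{2}.
\]
The lower bound is Eq.~\eqref{eq:scalar-log-bound}. The upper bound holds because $\phi'''=-1/(1+u)^2<0$ for $u>-1$. Both require $g>0$, which the positivity step supplies. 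Then
\[
\int W_0\phi(u)=\tfrac12\int W_0u^2+O\Bigl(\int W_0|u|^3\Bigr).
\]

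For the quadratic term, use $\langle\cos^2 n\theta\rangle=1/2$ and Eq.~\eqref{eq:doublet-moments2}. This gives $\int W_0u^2=2\lambda^2t^2\cdot2^n+O(t^4)$. The cross term vanishes by angular averaging, and the $(F_n-1)^2$ term is $O(t^4)$ with a finite Gaussian moment.

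The cubic term needs care, since $|u|^3$ contains a $|t|^3Q_n^3$ piece, which would only give $O(t^3)$. I would avoid estimating $|u|^3$ and instead use the exact cubic $u^3$ in the lower bound, as in Eq.~\eqref{eq:explicit-moments}. Terms odd in $\cos n\theta$ average to zero, so $\int W_0u^3=O(t^4)$. For the upper bound on $\phi$, the remainder $u+u^2/2$ is exact at second order. This pins $\int W_0\phi(u)$ between $\tfrac12\int W_0u^2$ and $\tfrac12\int W_0u^2-\tfrac16\int W_0u^3$. Both bounds equal $2^n\lambda^2t^2+O(t^4)$.

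Finally, Eq.~\eqref{eq:doublet-delta} gives
\[
\Delta=\frac{2nt^2}{1+t^2}-2^n\lambda^2t^2+O(t^4)=(2n-2^n\lambda^2)t^2+O(t^4).
\]
All implied constants are finite polynomial moments against $e^{-x}$, so they depend only on $n$ and $\lambda$.

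\emph{Consequence.} The coefficient $2n-2^n\lambda^2$ is negative exactly when $\lambda^2>n/2^{n-1}$, i.e. under Eq.~\eqref{eq:coherence-threshold}. For $n\ge3$ we have $n/2^{n-1}<1$, so this range of $\lambda$ is nonempty. Shrinking $t$ below $t_0$ and below the point where the $O(t^4)$ term is dominated gives $\Delta<0$ with strict Wigner positivity.

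The main obstacle is the remainder control. It relies on the sign-exact bracket above rather than on $|u|^3$, so the $t^3$ odd-angle term cancels and the error is $O(t^4)$.
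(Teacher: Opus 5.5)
Your positivity argument matches the paper's, and your lower Taylor bound is sound, but the upper half of your bracket is false. Since $\phi'''(u)=-1/(1+u)^2<0$, the second-order remainder $\phi'''(\xi)u^3/6$ has the sign of $-u^3$. This gives $\phi(u)\le u+u^2/2$ only for $u\ge0$, and the reverse inequality for $-1<u<0$; for example $\phi(-1/2)=\tfrac12\ln\tfrac12\approx-0.347>-0.375$. For $t\ne0$ we have $\int W_0u=0$ and $u\not\equiv0$, so $u$ is negative on a set of positive measure and the failing branch is genuinely used. As written, you have only proved $\Delta_{n,t,\lambda}\le(2n-2^n\lambda^2)t^2+O(t^4)$. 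That one-sided bound suffices for the final consequence (the counterexamples), but not for the two-sided expansion~\eqref{eq:general-doublet-expansion}.

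The repair uses the uniform bound $g\ge c_0>0$. You flagged this bound as necessary but never invoked it; your bracket as stated needs only $g>0$. Expand to fourth order, $\phi(u)=u+\frac{u^2}{2}-\frac{u^3}{6}+\frac{u^4}{12(1+\xi)^3}$, with $\xi$ between $0$ and $u$. Then $1+\xi\ge\min(1,c_0)$, and
\[
u+\frac{u^2}{2}-\frac{u^3}{6}\le\phi(u)\le u+\frac{u^2}{2}-\frac{u^3}{6}+\frac{u^4}{12\min(1,c_0)^3}.
\]
Since $|u|\le|t|\,(|t|\,|F_n-1|+2\lambda Q_n)$, we get $\int W_0u^4=O_{n,\lambda}(t^4)$. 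Your angular cancellation $\int W_0u^3=O(t^4)$ then gives the two-sided $O(t^4)$ error.

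With this fix, your route differs from the paper's. The paper bounds $g$ and its first four $t$-derivatives by $K(1+x^n)$ and justifies differentiating $\int W_0g\ln g$ four times under the integral. It then uses evenness in $t$ (from $\theta\mapsto\theta+\pi/n$) to remove the odd orders and reads off the $t^2$ coefficient from $\frac12\int W_0(\partial_tg)^2$. Your scalar Taylor bracket with explicit cancellation of the cubic moment is more elementary and yields computable constants, as in Theorem~\ref{thm:explicit-doublet}. The paper's argument is shorter and needs no cubic or quartic moments.
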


\begin{proof}
Choose $\lambda^2<\kappa<1$. Since $F_n$ and $Q_n^2$ have the same positive leading coefficient, $F_n-\kappa Q_n^2$ has a finite lower bound $-C$ on $[0,\infty)$. Hence
\begin{align}
1+t^2F_n-2\lambda|t|Q_n
&=\left(\sqrt\kappa|t|Q_n-\frac{\lambda}{\sqrt\kappa}\right)^2
+1-\frac{\lambda^2}{\kappa}+t^2(F_n-\kappa Q_n^2)\\
&\ge1-\frac{\lambda^2}{\kappa}-Ct^2.
\end{align}
For sufficiently small $|t|$, this is strictly positive uniformly in $x$ and $\theta$.

On a fixed closed $t$-interval containing zero, $g_{n,t,\lambda}\ge m>0$. Equation~\eqref{eq:doublet-wigner} bounds $g_{n,t,\lambda}$ and its first four $t$-derivatives by $K(1+x^n)$. Thus $Ke^{-x}(1+x^{4n})[1+\ln(2+x)]$ bounds the derivatives of $e^{-x}g_{n,t,\lambda}\ln g_{n,t,\lambda}$ through order four. This integrable bound justifies four differentiations under the integral. Since $\theta\mapsto\theta+\pi/n$ takes $t$ to $-t$, the entropy is even. At $t=0$,
\begin{equation}
g=1,\qquad \partial_tg=2\lambda Q_n\cos(n\theta),\qquad
\int W_0\,\partial_t^2g=0.
\end{equation}
Using Eq.~\eqref{eq:doublet-moments2}, the second-order coefficient of $\int W_0g\ln g$ is
\begin{equation}
\frac12\int W_0(\partial_tg)^2=2^n\lambda^2.
\end{equation}
The energy term in Eq.~\eqref{eq:doublet-delta} contributes $2nt^2+O_n(t^4)$, proving Eq.~\eqref{eq:general-doublet-expansion}.
\end{proof}

For $n=1,2$, the coefficient in Eq.~\eqref{eq:general-doublet-expansion} is strictly positive for every fixed $\lambda<1$. Within this fixed-$\lambda$ two-level quadratic mechanism, a negative quadratic term first becomes possible at $n=3$, with the coherence threshold in Eq.~\eqref{eq:coherence-threshold}.

\subsection{The joint positivity and entropy window}
\label{app:window}

For $n=3$, write $a=t^2$ and $Q(x)=2x^{3/2}/\sqrt3$. The angular minimum is
\[
\frac{1+aF(x)-2\lambda tQ(x)}{1+a}.
\]
Hence, for $t>0$, its nonnegativity is equivalent to
$\lambda\le\lambda_{\rm pos}(t)$ as defined in Eq.~\eqref{eq:main-window}.
Differentiating the function being minimised shows that its derivative has the sign of
$a(4x^3-6x^2-6x+3)-3$.
For $0<t<1$ this polynomial is negative at zero, decreases until
$x=(1+\sqrt3)/2$ and then increases strictly to infinity. The minimised function tends to infinity at both ends, so its unique positive stationary point is its global minimum.

The same second and third moments as in Eqs.~\eqref{eq:explicit-moments}, without fixing $\lambda$, give
\[
\frac{\Delta_{3,t,\lambda}}{t^2}\le
\frac{6-8\lambda^2+(-110+488\lambda^2)a+(86348/3)a^2}{(1+a)^3}.
\]
For the amplitudes in Fig.~\ref{fig:window}, $8-488a>0$. Solving for a negative numerator gives Eq.~\eqref{eq:main-window-cert}. This is an analytic sufficient entropy condition on the same state whose positivity was just established.

There is also a unique entropy crossing whenever the two signs occur. At fixed $t$, write $g=A+\lambda C$, where $A=(1+aF)/(1+a)$ and $C=2tQ\cos3\theta/(1+a)$. In the strict positivity interval,
\[
\frac{\mathrm d^2}{\mathrm d\lambda^2}\KL(W_0g\Vert W_0)
=\int W_0\frac{C^2}{A+\lambda C}>0.
\]
The first derivative vanishes at $\lambda=0$ by angular symmetry. Consequently $\KL$ increases and $\Delta$ decreases strictly for $\lambda>0$.
At $t=0.005$, numerical evaluations give
$\lambda_{\rm pos}\simeq0.92959375$,
$\lambda_{\rm cert}\simeq0.86648917$ and an entropy zero near $0.86583712$.

The numerical contour uses the exact angular integral
\[
\langle g\ln g\rangle_\theta
=A\ln\frac{A+\sqrt{A^2-B^2}}2+A-\sqrt{A^2-B^2},
\qquad B=\frac{4\lambda t x^{3/2}}{\sqrt3(1+a)},
\]
followed by radial quadrature against $e^{-x}\dd x$. Composite Gauss quadrature was compared with 50-digit quadrature at six interior points; the largest absolute entropy difference was below $9\times10^{-17}$. Four crossing evaluations at 60 digits had residuals below $9\times10^{-17}$. On the displayed parameter range, an absolute bound on the omitted radial integral beyond $x=100$ is $6.4\times10^{-35}$. These numerical checks locate the contour; the polynomial inequality certifies the hatched region.

\begin{figure}[!htbp]
\centering\includegraphics[width=\linewidth]{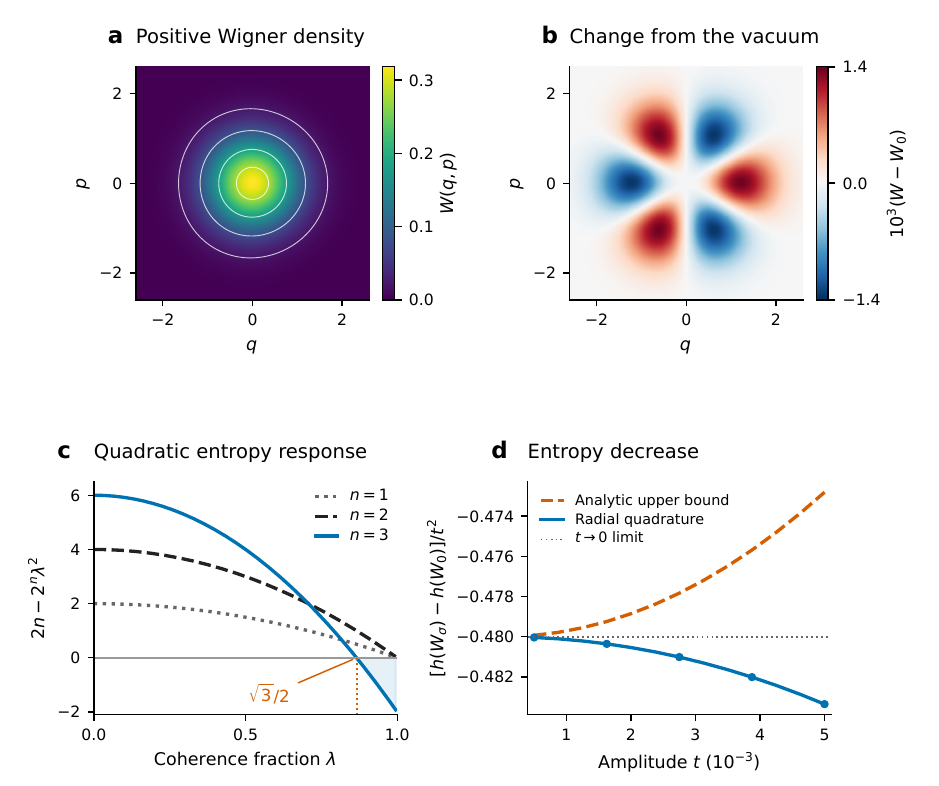}
\caption{\textbf{Phase-space structure and the fixed-coherence response.}
\textbf{a,b}, The Wigner density and its difference from vacuum for $\sigma_{3,1/200,9/10}$. The difference colour scale is multiplied by $10^3$. Equation~\eqref{eq:explicit-positive} gives $W_\sigma/W_0>0.05497$ globally.
\textbf{c}, The coefficient $2n-2^n\lambda^2$ from Theorem~\ref{thm:general-doublet}; its first negative quadratic direction occurs at $n=3$.
\textbf{d}, At $\lambda=0.9$, the exact angular entropy integral followed by radial quadrature gives the solid curve, with markers selecting points. The dashed curve is the analytic upper bound~\eqref{eq:explicit-rational-upper}, divided by $t^2$; the dotted line is its limit $-0.48$.}
\label{fig:mechanism}
\end{figure}

\section{Minimum thermal weight for Wigner positivity}

Return to the coherent superposition in Eq.~\eqref{eq:state}, now with variable amplitude $t>0$, and keep the thermal state $\tau$ fixed. We determine the smallest mixing weight for which the Wigner function is nonnegative.

Let
\begin{equation}
|\psi_t\rangle=\frac{|0\rangle+t|3\rangle}{\sqrt{1+t^2}},\qquad
\tau=\frac23\sum_{j=0}^\infty3^{-j}|j\rangle\langle j|,
\end{equation}
and
\begin{equation}
\rho^{\rm th}_{t,\epsilon}=(1-\epsilon)|\psi_t\rangle\langle\psi_t|+\epsilon\tau,
\qquad W_\tau=\frac1{2\pi}e^{-x/2}.
\end{equation}
Let $\epsilon_*(t)$ be the least $\epsilon\in[0,1]$ for which $W_{\rho^{\rm th}_{t,\epsilon}}\ge0$ everywhere.

\begin{proposition}
\label{prop:repair}
For all sufficiently small $t>0$, define
\begin{equation}
M(t)=\sup_{x\ge0}\frac{2e^{-x/2}}{1+t^2}
\left[-1-t^2F(x)+\frac{4t}{\sqrt3}x^{3/2}\right]_+.
\label{eq:Mrepair}
\end{equation}
Here $[u]_+=\max\{u,0\}$. Then
\begin{equation}
\epsilon_*(t)=\frac{M(t)}{1+M(t)},
\label{eq:epsstar}
\end{equation}
and
\begin{equation}
\lim_{t\downarrow0}t^{2/3}\ln\epsilon_*(t)
=-\frac12\left(\frac34\right)^{1/3}.
\label{eq:repair-asymptotic}
\end{equation}
At $\epsilon=\epsilon_*(t)$, the Wigner function has a zero. It is strictly positive for $\epsilon\in(\epsilon_*(t),1]$.
\end{proposition}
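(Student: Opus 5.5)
Write the Wigner function of the mixture exactly, as in Appendix~\ref{app:fixed}. By linearity and Eq.~\eqref{eq:doublet-wigner} with $n=3$, $\lambda=1$,
\[
W_{\rho^{\rm th}_{t,\epsilon}}=\frac{e^{-x}}{\pi}\left\{(1-\epsilon)\frac{1+t^2F(x)+\frac{4t}{\sqrt3}x^{3/2}\cos3\theta}{1+t^2}+\frac{\epsilon}{2}e^{x/2}\right\}.
\]
Minimising over $\theta$ (take $\cos3\theta=-1$) gives a condition on $x$ alone. Nonnegativity everywhere is equivalent to
\[
\frac{\epsilon}{1-\epsilon}\ \ge\ \frac{2e^{-x/2}}{1+t^2}\Bigl[-1-t^2F(x)+\tfrac{4t}{\sqrt3}x^{3/2}\Bigr]\quad\text{for all } x\ge0.
\]
Because the left side is nonnegative, we may take positive parts on the right. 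The condition becomes $\epsilon/(1-\epsilon)\ge M(t)$. We next check that $M(t)$ is finite and attained: the bracket is polynomial in $\sqrt x$ and the prefactor is $e^{-x/2}$, so the expression tends to $0$ as $x\to\infty$. The function $\epsilon\mapsto\epsilon/(1-\epsilon)$ is strictly increasing. Hence the least admissible weight is $\epsilon_*=M/(1+M)$, which is Eq.~\eqref{eq:epsstar}. At $\epsilon_*$ the maximiser $x_*$ together with $\cos3\theta=-1$ gives a zero, provided $M>0$. For $\epsilon>\epsilon_*$ the inequality is strict for every $x$, since $M$ is a supremum that is attained. This proves strict positivity everywhere.

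We must also show that $M(t)>0$ for small $t$, meaning the pure state is somewhere negative. The dominant balance gives this. Take $x=(c/t)^{2/3}$. Then $\frac{4t}{\sqrt3}x^{3/2}=\frac{4c}{\sqrt3}$ exceeds $1$ once $c>\sqrt3/4$. Meanwhile $t^2F(x)=\frac43t^2x^3+O(t^2x^2)=\frac43c^2+O(t^{2/3})$. The bracket therefore becomes $\phi(c)+o(1)$, where $\phi(c)=-1-\frac43c^2+\frac{4}{\sqrt3}c$. We have $\phi(\sqrt3/2)=-1-1+2=0$, and $\phi$ is a downward parabola with maximum $\phi(\sqrt3/4)=-1-\frac14+1<0$. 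That would give no negativity, so the lower-order terms of $F$ have to be included. Here $-t^2(-6x^2)=6t^2x^2\sim6c^{4/3}t^{2/3}>0$ is a positive correction, which helps near the double-root region. I would verify this sign more carefully: writing $s=t x^{3/2}$, the bracket is $-(1-\tfrac{2}{\sqrt3}s)^2+6t^2x^2-6t^2x+t^2$. This is positive in a thin window around $s=\sqrt3/2$ of width $O(t^{1/3})$, consistent with Fig.~\ref{fig:thermal}. So $M(t)>0$.

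For the asymptotic \eqref{eq:repair-asymptotic}, first note $\ln\epsilon_*=\ln M+O(M)$. An upper bound comes from observing that the bracket is positive only where $s$ lies within $O(t^{1/3})$ of $\sqrt3/2$, i.e. for $x=x_0(1+O(t^{1/3}))$ with $x_0=(3/(4t^2))^{1/3}$. On that window the bracket is at most polynomial in $x$, so $\ln M\le -x_0/2+O(t^{-1/3})+O(\ln(1/t))$. The lower bound uses the explicit point $s=\sqrt3/2$, where the bracket is $\ge 6t^2x^2-6t^2x+t^2>0$. This gives $\ln M\ge -x_0/2-O(\ln(1/t))$. Multiplying by $t^{2/3}$ gives the limit $-\tfrac12(3/4)^{1/3}$. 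The main obstacle is the window localisation: we need an upper bound showing that the bracket is nonpositive for all $x$ with $|x-x_0|\gg x_0t^{1/3}$, including small $x$. I would handle it from the square-completed form, bounding $6t^2x^2$ by $(1-\tfrac{2}{\sqrt3}s)^2$ outside the window, uniformly for small $t$.
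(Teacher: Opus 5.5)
Your proposal is correct and follows essentially the paper's proof. It uses the same angular reduction to $\epsilon/(1-\epsilon)\ge M(t)$ and the same square-completed bracket $-(1-\tfrac{2}{\sqrt3}tx^{3/2})^2+t^2-6t^2x+6t^2x^2$. It also uses the same test point $x_0=(3/(4t^2))^{1/3}$ (the paper's $\xi^{-2/3}$), both to show $M>0$ and for the lower bound, and it localises the negative set for the upper bound.

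Two small remarks. Your aside that $\phi$ peaks at $c=\sqrt3/4$ with a negative value is wrong. In fact $\phi(c)=-(1-2c/\sqrt3)^2$ has maximum $0$ at $c=\sqrt3/2$, which is exactly why the lower-order terms decide the sign.

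The window localisation you flag as the main obstacle is only needed on the small-$x$ side. As in the paper, fix $b<1$: the bracket is negative on $[0,bx_0]$. On $x\ge bx_0$, the bound $[\,\cdot\,]_+\le 6t^2x^2+t^2$ together with the factor $e^{-x/2}$ already gives $\limsup_{t\downarrow0}t^{2/3}\ln M\le-\tfrac{b}{2}(3/4)^{1/3}$, and letting $b\uparrow1$ finishes the upper bound.
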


\begin{proof}
Put $\xi=2t/\sqrt3$ and $a=t^2=3\xi^2/4$. The sign of the angular minimum is determined by
\begin{equation}
P_\xi(x)=1-2\xi x^{3/2}+aF(x)
=(1-\xi x^{3/2})^2-a+6ax-6ax^2.
\end{equation}
At $x_\xi=\xi^{-2/3}$, $P_\xi(x_\xi)=-a(1-6x_\xi+6x_\xi^2)<0$ for sufficiently small $\xi$. The function in Eq.~\eqref{eq:Mrepair} is continuous, tends to zero at infinity, and is positive at $x_\xi$. Its maximum $M(t)>0$ is therefore attained at a finite $x_*$. Global Wigner nonnegativity is equivalent to $\epsilon-(1-\epsilon)M(t)\ge0$, which gives Eq.~\eqref{eq:epsstar}. Equality produces a zero at $x_*$ and a minimizing angle; a larger mixing weight in $[0,1]$ gives strict positivity.

The test point gives
\begin{equation}
M(t)\ge\frac{2a}{1+a}(1-6\xi^{-2/3}+6\xi^{-4/3})
\exp[-1/(2\xi^{2/3})],
\end{equation}
whose prefactor is $9\xi^{2/3}(1+o(1))$. Hence
\begin{equation}
\liminf_{\xi\downarrow0}\xi^{2/3}\ln M\ge-\frac12.
\end{equation}
Conversely, fix $b\in(0,1)$ and put $x_b=b\xi^{-2/3}$. For sufficiently small $\xi$, $P_\xi>0$ on $[0,x_b]$. Indeed, for $x\le b\xi^{-2/3}$, the square term is at least $(1-b^{3/2})^2$, while the remaining terms are $O_b(\xi^{2/3})$ uniformly. On $x\ge x_b$, the polynomial-Gaussian tail gives
\begin{equation}
M(t)\le\frac{2}{1+a}\left(2b^{3/2}+a+\frac92b^2\xi^{2/3}\right)
\exp[-b/(2\xi^{2/3})].
\end{equation}
Thus $\limsup\xi^{2/3}\ln M\le-b/2$; taking $b\uparrow1$ gives $\xi^{2/3}\ln M\to-1/2$. Since $M\to0$, Eq.~\eqref{eq:epsstar} has the same logarithmic asymptotic. Converting from $\xi$ to $t$ yields Eq.~\eqref{eq:repair-asymptotic}. Equivalently,
\[
\epsilon_*(t)=\exp\!\left[-\frac12\left(\frac34\right)^{1/3}t^{-2/3}+o(t^{-2/3})\right]
\qquad (t\downarrow0).
\]
\end{proof}

\subsection{An exact certificate for the smaller thermal repair}
\label{app:tight-repair}

Fix $a=3/4000000$ and $\epsilon_1=2\times10^{-21}$. The angular minimum of $W_{\rho_{\rm f}}/W_0$, with $y=\sqrt{x}$ and $c_1=(1-\epsilon_1)/(1+a)$, is
\[
c_1P(y)+\frac{\epsilon_1}{2}e^{y^2/2},\qquad
P(y)=1-a+6ay^2-\frac{y^3}{500}-6ay^4+\frac{4a}{3}y^6.
\]
The following finite rational comparison proves a global lower bound; its partitions and truncation are given explicitly to make it reproducible.

On each of the 18 intervals $[k/2,(k+1)/2]$, $0\le k\le17$, convert the degree-six polynomial $P$ to its Bernstein basis using Eq.~\eqref{eq:bernstein} with degree six. All coefficients are positive. For $y\ge11$, all coefficients of $P(11+z)$ are positive. These bounds control the two exterior regions.

For the remaining interval $[9,11]$, set $l_j=9+j/128$ and $r_j=l_j+1/128$ for $0\le j\le255$. Let $b_j$ be the smallest degree-six Bernstein coefficient of $P$ on $[l_j,r_j]$. Throughout that interval,
\[
c_1P(y)+\frac{\epsilon_1}{2}e^{y^2/2}
\ge c_1b_j+\frac{\epsilon_1}{2}
\sum_{k=0}^{96}\frac{(l_j^2/2)^k}{k!}.
\]
All quantities on the right are rational. The smallest of these 256 bounds and the two exterior lower bounds exceeds $79/10000$ by exact rational comparison. Its decimal value is approximately $0.00792693364$. This proves the positivity part of Eq.~\eqref{eq:main-new-certified}.

The negative interval of the uncorrected angular minimum is
$87.311619\ldots<x<115.690053\ldots$.
For completeness, exact Sturm counts give two positive roots of $P(y)$, one in $[9.344,9.345]$ and one in $[10.755,10.756]$. The polynomial $P'(y)-yP(y)$ has exactly one root between these brackets. Therefore the demand in Eq.~\eqref{eq:main-repair} has one interior maximum. Its numerical location is $x\simeq89.1900077$, with $M\simeq9.0397820\times10^{-22}$. The exact positivity certificate above is independent of this numerical maximisation.

To prove the entropy decrease, let $\rho_0$ be the certified state in Appendix~\ref{app:fixed}, with the same $t$ and $\epsilon_0=10^{-12}$. Define
\[
r=\frac{\epsilon_0-\epsilon_1}{1-\epsilon_1}\in(0,1).
\]
Then $\rho_0=(1-r)\rho_{\rm f}+r\tau$. Every Wigner density in this identity is nonnegative, and $h(W_\tau)=h_0+\ln2$. Concavity of differential entropy gives
\begin{equation}
h(W_{\rho_{\rm f}})-h_0
\le\frac{\Delta_0-r\ln2}{1-r}<0,
\qquad \Delta_0=h(W_{\rho_0})-h_0<0 .
\label{eq:tight-concavity}
\end{equation}
Equation~\eqref{eq:delta-final} supplies a strict upper bound for $\Delta_0$. Thus the new entropy sign follows analytically from a positive-density mixture, without attributing Shannon entropy to the negative Wigner function of the coherent component.

For Fig.~\ref{fig:thermal}c, use $A=c_1[1+aF(x)]+(\epsilon_1/2)e^{x/2}$, $B=c_1x^{3/2}/500$ and the angular integral $J=\langle(A+B\cos3\theta)\ln(A+B\cos3\theta)\rangle_\theta$. Then
\[
H_{\rho_{\rm f}}(X)-H_0(X)
=\int_0^Xe^{-x}\{(A-1)(\ln\pi+x)-J(x)\}\,\dd x.
\]
The curve uses composite 32-point Gauss quadrature on intervals of length $0.05$ up to $X=32$. Values at $X=1,4,8,16,24,32$ agree with 70-digit quadrature within $3\times10^{-16}$. The complete numerical entropy difference, approximately $-1.5001440445\times10^{-6}$, also agrees when computed through Eq.~\eqref{eq:central-balance}. Its negative sign is established by Eq.~\eqref{eq:tight-concavity}.

\section{Energy response and a one-photon core}
\subsection{No uniform linear energy correction}

The entropy deficit per mean photon number is unbounded arbitrarily close to zero energy.

\begin{corollary}
\label{cor:no-linear}
For every finite $C>0$ and every $E_0>0$, there is a one-mode physical state $\sigma$ with finite Fock support such that
\begin{equation}
W_\sigma>0,\qquad 0<\Tr(\hat N\sigma)<E_0,
\end{equation}
and
\begin{equation}
h(W_\sigma)<1+\ln\pi-C\Tr(\hat N\sigma).
\label{eq:no-linear}
\end{equation}
Hence no finite linear correction in the mean photon number restores a universal vacuum lower bound in any neighborhood of zero energy.
\end{corollary}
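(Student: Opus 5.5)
The plan is to use the finite-support family of Theorem~\ref{thm:general-doublet} with a Fock level $n$ chosen large in terms of $C$. For $\sigma=\sigma_{n,t,\lambda}$ the energy is $E_{n,t}=nt^2/(1+t^2)$, and Eq.~\eqref{eq:general-doublet-expansion} gives $\Delta_{n,t,\lambda}=(2n-2^n\lambda^2)t^2+O_{n,\lambda}(t^4)$. Since $E_{n,t}=nt^2+O_n(t^4)$, the target inequality~\eqref{eq:no-linear} reads $\Delta_{n,t,\lambda}<-C E_{n,t}$. It is therefore enough to arrange
\[
2n-2^n\lambda^2<-Cn
\]
strictly. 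The $O(t^4)$ remainder and the $O(t^4)$ correction to the energy can then be absorbed by taking $t$ small.

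First, fix $C>0$ and $E_0>0$. Choose $\lambda=9/10$, or any fixed $\lambda<1$, and then choose $n\ge3$ with $2^n\lambda^2>(2+C)n+1$. This is possible because $2^n$ grows faster than any linear function of $n$. The coefficient $c_n:=2n-2^n\lambda^2$ then satisfies $c_n<-Cn-1$.

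Second, apply Theorem~\ref{thm:general-doublet} with these fixed $(n,\lambda)$. It supplies $t_0(n,\lambda)$ such that $W_{\sigma}>0$ everywhere for $0<|t|\le t_0$, and a constant $K_{n,\lambda}$ with $|\Delta_{n,t,\lambda}-c_nt^2|\le K_{n,\lambda}t^4$ for small $t$. Write
\[
-CE_{n,t}=-Cnt^2+Cnt^4/(1+t^2).
\]
Then $\Delta_{n,t,\lambda}+CE_{n,t}\le(c_n+Cn)t^2+(K_{n,\lambda}+Cn)t^4\le -t^2+(K_{n,\lambda}+Cn)t^4$, which is negative once $t^2<1/(K_{n,\lambda}+Cn)$. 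Also shrink $t$ so that $nt^2/(1+t^2)<E_0$. The state has Fock support in $\{|0\rangle,|n\rangle\}$ and positive energy for $t\ne0$, which gives all the required properties.

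The only delicate point is that the $O_{n,\lambda}(t^4)$ constant depends on $n$. That is harmless here, because $n$ is fixed before $t$ is chosen: the order of quantifiers is $C\to n\to t$. The final sentence of the corollary then follows immediately. Any putative bound $h\ge h_0-C\Tr(\hat N\rho)$ valid on an energy neighbourhood $(0,E_0)$ is contradicted by the state just constructed.
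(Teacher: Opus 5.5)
Your proposal is correct and follows essentially the same route as the paper's proof. Both fix $\lambda=9/10$, choose $n$ large enough that the quadratic coefficient $2n-2^n\lambda^2$ beats $-Cn$, and then shrink $t$ for positivity, the energy bound and the strict inequality. The paper phrases the last step through the limit $-\Delta_{n,t,\lambda}/E_{n,t}\to 2^n\lambda^2/n-2$, whereas you make the $O(t^4)$ remainder explicit.
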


\begin{proof}
Fix $\lambda=9/10$. By Theorem~\ref{thm:general-doublet}, for any fixed $n$,
\begin{equation}
\frac{-\Delta_{n,t,\lambda}}{E_{n,t}}
\longrightarrow \frac{2^n\lambda^2}{n}-2
\qquad(t\to0).
\end{equation}
Choose a finite $n$ large enough that the right side exceeds $C$. For this fixed $n$, all sufficiently small nonzero $t$ give strict Wigner positivity and preserve the strict ratio inequality. Taking $t$ smaller if necessary also makes $E_{n,t}<E_0$, proving Eq.~\eqref{eq:no-linear}.
\end{proof}

\subsection{A sixth-order counterexample with a one-photon core}
\label{sec:one-photon}

The positive quadratic coefficient for $n=1$ in Theorem~\ref{thm:general-doublet} assumes a fixed coherence fraction $\lambda<1$. A fully coherent vacuum--one-photon component, with a thermal admixture tending to zero, gives a different small-amplitude limit. Throughout this section, $\tau$ is the thermal state in Eq.~\eqref{eq:state} and $T=W_\tau=(2\pi)^{-1}e^{-x/2}$.

\begin{theorem}
\label{thm:one-photon}
For sufficiently small $t>0$, define
\begin{equation}
 |\chi_t\rangle=\frac{|0\rangle+t|1\rangle}{\sqrt{1+t^2}},\qquad
 \epsilon_t=4t^2\exp\!\left(-\frac{1}{4t^2}+\frac{1}{2t}\right),\qquad
 \omega_t=(1-\epsilon_t)|\chi_t\rangle\langle\chi_t|+\epsilon_t\tau.
 \label{eq:one-state}
\end{equation}
Then $\omega_t$ is a finite-energy physical state with
\begin{equation}
 W_{\omega_t}\ge\frac{\epsilon_t}{2}T>0,\qquad
 E_t:=\Tr(\hat N\omega_t)=\frac{(1-\epsilon_t)t^2}{1+t^2}+\frac{\epsilon_t}{2},
 \label{eq:one-positive-energy}
\end{equation}
and
\begin{equation}
 h(W_{\omega_t})-(1+\ln\pi)=-\frac43t^6+o(t^6).
 \label{eq:one-sixth}
\end{equation}
\end{theorem}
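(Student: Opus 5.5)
We would treat $\omega_t$ as a vacuum--one-photon core whose only negativity sits in a remote ring, and which the thermal tail covers. The entropy would then be computed from an inner Taylor expansion plus exponentially small remainders. By the parity formula of Appendix~\ref{app:fixed}, specialised as in Eq.~\eqref{eq:doublet-wigner} with $n=1$ and $\lambda=1$ (now a pure state), $F_1(x)=2x-1$ and $Q_1(x)=\sqrt{2x}$, so
\[
W_{\chi_t}=W_0\,\frac{1+t^2(2x-1)+2t\sqrt{2x}\cos\theta}{1+t^2},
\qquad
\min_\theta(\cdots)=\frac{(t\sqrt{2x}-1)^2-t^2}{1+t^2}\ge-\frac{t^2}{1+t^2}.
\]
The angular minimum is negative only when $|\sqrt{2x}-1/t|<1$. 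This forces $x\ge x_-:=\tfrac12(1/t-1)^2=\frac1{4t^2}\cdot 2-\frac1t+\frac12$, so that $x_-/2=\frac1{4t^2}-\frac1{2t}+\frac14$.

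\emph{Positivity and energy.} We write $W_{\omega_t}=(1-\epsilon_t)W_{\chi_t}+\epsilon_tT$. Outside the ring the core term is nonnegative, so $W_{\omega_t}\ge\epsilon_tT>\frac{\epsilon_t}2T$. Inside the ring we need $(1-\epsilon_t)\frac{t^2}{1+t^2}W_0\le\frac{\epsilon_t}{2}T$, which is equivalent to $t^2\le\frac{\epsilon_t}{4}e^{x/2}$ because $T/W_0=e^{x/2}/2$. With the chosen $\epsilon_t$ the right side is $t^2\exp(x/2-\frac1{4t^2}+\frac1{2t})\ge t^2e^{1/4}$ for $x\ge x_-$, so the inequality holds. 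This gives the lower bound in Eq.~\eqref{eq:one-positive-energy}. The energy formula follows by linearity, since $\Tr(\hat N\tau)=1/2$ and the core has energy $t^2/(1+t^2)$.

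\emph{Entropy.} We would use Eq.~\eqref{eq:central-balance}, $\Delta=2E_t-\int W_0\,g\ln g$ with $g=W_{\omega_t}/W_0$. Take the disk $D_t=\{x\le R_t\}$ with $R_t=1/(32t^2)$. On $D_t$ set $g=1+u+\delta$, where
\[
u=\frac{t^2(2x-2)+2t\sqrt{2x}\cos\theta}{1+t^2},
\]
so that $|u|\le 1/2$ on $D_t$ for small $t$. The thermal correction is $\delta=\epsilon_t(T/W_0-g_\chi)=O(\epsilon_te^{R_t/2})=O(e^{-c/t^2})$. Expand $(1+u)\ln(1+u)=u+\sum_{k=2}^{7}\frac{(-1)^k u^k}{k(k-1)}+O(u^8)$. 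Since $u=O(t\sqrt{x}+t^2x)$, the remainder integrated against $W_0$ is $O(t^8)$. Each integral $\int_{D_t}W_0u^k$ differs from the full-plane Gaussian moment by $O(\mathrm{poly}(1/t)\,e^{-R_t})$. Those moments are finite polynomials in $t^2$, computed from $\langle\cos^{2j}\theta\rangle$ and $\int_0^\infty x^me^{-x}\,dx=m!$.

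Outside $D_t$, the bounds $\frac{\epsilon_t}2T\le W_{\omega_t}\le C(1+x)e^{-x/2}$ give $|\ln W_{\omega_t}|\le C'(x+t^{-2})$. Hence $\int_{D_t^c}W|\ln W|$ and the corresponding vacuum pieces are $O(t^{-4}e^{-R_t/2})=o(t^6)$. The same estimate shows finite entropy.

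The step we expect to be the main obstacle is the algebraic cancellation through order $t^6$. The terms in $t^2$ and $t^4$ must cancel exactly against $2E_t=2t^2/(1+t^2)+O(\epsilon_t)$, leaving $-\frac43t^6$. Grinding the moment sums directly is error-prone. We would organise them by comparison with the coherent state $|\alpha=t/\sqrt2\rangle$ in the $\hat q$-direction. Its density ratio $W_\alpha/W_0=\exp(2t\sqrt{2x}\cos\theta\cdot\tfrac12\cdot\ldots-t^2)$ is an exact exponential with $h=h_0$ and energy matching $t^2$. The core ratio $g_\chi$ agrees with it up to $O(t^3\,\mathrm{poly}(x))$ terms, which are the third-order Hermite/Laguerre components of $e^{s}$ that the one-photon truncation removes. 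Expanding $\int W_0\,g_\chi\ln g_\chi-\int W_0\,g_\alpha\ln g_\alpha$ in the difference then isolates the first nonvanishing contribution. That contribution is the squared $t^3$ Hermite component, which carries weight $t^6/3!$ times combinatorial factors, together with its cross terms. We would verify that these assemble to $+\frac43t^6$ in $\int W_0g\ln g$ net of $2E_t$. If the shortcut proves awkward, we would fall back on the direct moment computation of $\int W_0u^k$ for $k\le6$, which is finite and mechanical.
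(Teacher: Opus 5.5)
Your positivity argument and the analytic reduction are sound and follow the paper. The core's negativity lies where $|\sqrt{2x}-1/t|<1$, hence at $x\ge\frac12(1/t-1)^2$. There $\frac{\epsilon_t}{4}e^{x/2}\ge t^2e^{1/4}$; your displayed condition is sufficient rather than equivalent. The entropy then reduces to Gaussian moments through an inner Taylor expansion with negligible repair and tail errors.

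Your cutoff $R_t=1/(32t^2)$, in place of the paper's $32\ln(1/t)$, is a valid variant:
\begin{itemize}
\item $\epsilon_te^{R_t/2}=O(e^{-c/t^2})$ still holds.
\item You get $|u|\le 9/16$ rather than $1/2$, which is harmless and keeps the Taylor remainder uniform.
\item The crude envelope $W_{\omega_t}\le C(1+x)e^{-x/2}$ already makes the exterior $O(t^{-4}e^{-1/(64t^2)})$. You therefore need not keep the $e^{-x}$ core and the $\epsilon_te^{-x/2}$ tail separate, as the paper does.
\end{itemize}

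The gap is that you never derive the number the theorem asserts. Its content is that the $t^2$ and $t^4$ terms cancel and the $t^6$ coefficient is $-\frac43$. The shortcut you plan to use first rests on a false premise: no coherent state agrees with $g_\chi$ up to $O(t^3)$.

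\begin{itemize}
\item Matching the linear term $2\sqrt2tq$ requires the ratio $\exp(2\sqrt2tq-2t^2)$, with mean photon number $t^2$. Your expression, with $-t^2$ in the exponent, has linear term $2tq$ and energy $t^2/2$, so it does not even match at first order.
\item The correct ratio has $t^2$ term $t^2(4q^2-2)$. This differs from the term $t^2(2x-2)$ of $g_\chi$ by the second harmonic $2t^2(q^2-p^2)=2t^2x\cos2\theta$. That is the vacuum--two-photon coherence the truncation removes.
\item The energies $t^2$ and $t^2/(1+t^2)$ also differ at order $t^4$.
\end{itemize}

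That coherent state has relative entropy exactly $2t^2$ to $W_0$. So the difference you want to isolate is $-2t^4+\frac{10}{3}t^6+o(t^6)$. It starts at fourth order and is not a squared third-order component with weight $t^6/3!$.

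You therefore have to carry out the direct computation you list as a fallback, which is exactly what the paper does with Eq.~\eqref{eq:one-moments}. With $u=(tA+t^2B)/(1+t^2)$, $A=2\sqrt2q$ and $B=2(x-1)$, one finds, up to $O(t^8)$,
\[
\langle u^2\rangle=\frac{4t^2}{1+t^2},\quad
\langle u^3\rangle=24t^4-56t^6,\quad
\langle u^4\rangle=48t^4+96t^6,\quad
\langle u^5\rangle=\langle u^6\rangle=960t^6 .
\]
Hence
\[
\int W_0g\ln g=\tfrac12\langle u^2\rangle-\tfrac16\langle u^3\rangle+\tfrac1{12}\langle u^4\rangle-\tfrac1{20}\langle u^5\rangle+\tfrac1{30}\langle u^6\rangle+o(t^6)=2t^2-2t^4+\tfrac{10}{3}t^6+o(t^6).
\]
Subtracting this from $2E_t=2t^2-2t^4+2t^6+o(t^6)$ gives $-\frac43t^6$. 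Without this step, the proposal establishes positivity and reduces the entropy to a finite moment problem, but it does not prove Eq.~\eqref{eq:one-sixth}.
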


\begin{proof}
The parity formula gives
\begin{equation}
 W_{|\chi_t\rangle\langle\chi_t|}
 =\frac{W_0}{1+t^2}\left[1+2\sqrt2tq+t^2(2x-1)\right].
 \label{eq:one-wigner}
\end{equation}
The bracket equals
\begin{equation}
 2t^2\left[\left(q+\frac{1}{\sqrt2t}\right)^2+p^2\right]-t^2.
 \label{eq:one-disk}
\end{equation}
Its negative part is therefore confined to the disk of radius $1/\sqrt2$ centred at $(-1/(\sqrt2t),0)$. For $t<1$, the smallest $x$ in this disk is $1/(2t^2)-1/t+1/2$. Writing $W_t=W_{|\chi_t\rangle\langle\chi_t|}$, we obtain on this disk
\begin{equation}
 \frac{W_t^-}{T}\le \frac{2t^2}{1+t^2}
 \exp\!\left(-\frac{1}{4t^2}+\frac{1}{2t}-\frac14\right)
 <\frac{\epsilon_t}{2}.
 \label{eq:one-negative-bound}
\end{equation}
Outside it, $W_t\ge0$. Since $\epsilon_t\to0$, the mixture is physical for sufficiently small $t$, and Eq.~\eqref{eq:one-positive-energy} follows on the full plane.

Put $P_t=W_{\omega_t}$ and $D_t=32\ln(1/t)$. On $x\le D_t$, write
\begin{equation}
 \frac{W_t}{W_0}=1+U_t,\qquad
 U_t=\frac{tA+t^2B}{1+t^2},\qquad A=2\sqrt2q,\quad B=2(x-1).
 \label{eq:one-U}
\end{equation}
Here $\sup_{x\le D_t}|U_t|\to0$. The difference between $P_t/W_0$ and $1+U_t$ is bounded by
\begin{equation}
 \epsilon_t\left[C(1+D_t)+\frac12e^{D_t/2}\right],
 \label{eq:one-inner-repair}
\end{equation}
which is smaller than every fixed power of $t$.
For $\Phi(y)=y\ln y-y+1$, Taylor's formula on this disk gives
\begin{equation}
 \Phi(1+u)=\sum_{j=2}^{6}\frac{(-1)^j u^j}{j(j-1)}+O(|u|^7).
 \label{eq:one-Taylor}
\end{equation}
Its integrated remainder is $O(t^7)$: the Gaussian moments of $|A|^7$ and $|B|^7$ are finite. Extending the finite polynomial terms to the plane contributes only $o(t^6)$. With brackets denoting integration against $W_0$, the required moments are
\begin{align}
 \langle A^2\rangle&=4,&\langle B^2\rangle&=4,&
 \langle A^2B\rangle&=8,&\langle A^4\rangle&=48,\notag\\
 \langle B^3\rangle&=16,&\langle A^2B^2\rangle&=48,&
 \langle A^4B\rangle&=192,&\langle A^6\rangle&=960.
 \label{eq:one-moments}
\end{align}
Odd powers of $q$ integrate to zero. Substitution yields
\begin{equation}
 \int_{x\le D_t}W_0\Phi(1+U_t)\,\dd q\,\dd p
 =2t^2-2t^4+\frac{10}{3}t^6+o(t^6).
 \label{eq:one-inner-KL}
\end{equation}

The exterior must be estimated for the repaired, positive density. The lower bound in Eq.~\eqref{eq:one-positive-energy} and $|W_t|\le C(1+x)W_0$ imply
\begin{equation}
 \left|\ln\frac{P_t}{W_0}\right|\le C(1+t^{-2}+x),\qquad
 P_t\le C(1+x)W_0+\epsilon_tT.
 \label{eq:one-log-bound}
\end{equation}
Consequently, for a fixed integer $M$,
\begin{equation}
 \int_{x>D_t}P_t\left|\ln\frac{P_t}{W_0}\right|\,\dd q\,\dd p
 \le Ct^{-2}(1+D_t)^M e^{-D_t}+C\epsilon_t(1+t^{-2})=o(t^6).
 \label{eq:one-tail}
\end{equation}
The exterior mass of $|P_t-W_0|$ is also $o(t^6)$. Normalisation and Eqs.~\eqref{eq:one-inner-repair}--\eqref{eq:one-tail} therefore give the full relative-entropy expansion
\begin{equation}
 D_{\mathrm{KL}}(P_t\|W_0)
 :=\int P_t\ln(P_t/W_0)\,\dd q\,\dd p
 =2t^2-2t^4+\frac{10}{3}t^6+o(t^6).
 \label{eq:one-full-KL}
\end{equation}
Finally,
\begin{equation}
 h(P_t)-(1+\ln\pi)=2E_t-D_{\mathrm{KL}}(P_t\|W_0),\qquad
 2E_t=2t^2-2t^4+2t^6+o(t^6).
 \label{eq:one-energy-series}
\end{equation}
which proves Eq.~\eqref{eq:one-sixth}.
\end{proof}

The $t^2$ and $t^4$ terms in Eqs.~\eqref{eq:one-full-KL} and \eqref{eq:one-energy-series} cancel in the entropy difference; at sixth order their coefficients differ by $2-10/3=-4/3$. In contrast, $\sigma_{1,t,\lambda}$ has the positive leading term $2(1-\lambda^2)t^2$ for every fixed $\lambda<1$. These are distinct paths to the vacuum. The state $\sigma_{1,t,\lambda}$ remains in $\operatorname{span}\{|0\rangle,|1\rangle\}$, whereas $\omega_t$ has a fully coherent one-photon core and the infinite Fock support of $\tau$. The latter is outside the two-level class considered in Ref.~\cite{QianGagatsos2024}.

In terms of the complete state's mean photon number, $E_t\sim t^2$ and
\begin{equation}
 1+\ln\pi-h(W_{\omega_t})=\frac43 E_t^3(1+o(1)).
 \label{eq:one-energy-cubic}
\end{equation}

\section{Sharp low-energy scale}
\label{sec:sharp-energy}\label{app:sharp-statements}

We optimise over the complete class of one-mode Wigner-nonnegative states. Define
\begin{equation}
 \mathcal D(E)=\sup_{\substack{\rho\ge0,\ \Tr\rho=1,\ W_\rho\ge0\\
                           \Tr(\hat N\rho)\le E}}
 \left[1+\ln\pi-h(W_\rho)\right],\qquad L=\ln(1/E).
 \label{eq:optimal-deficit}
\end{equation}
The vacuum is admissible, so the supremum is nonnegative. The parity representation gives $0\le W_\rho\le1/\pi$. For an upper entropy bound, compare $W_\rho$ with the normalised density
\[
 G_E(q,p)=\frac{1}{\pi(2E+1)}\exp\!\left[-\frac{x}{2E+1}\right].
\]
The scalar inequality $s\ln(s/r)-s+r\ge0$ for $s\ge0$ and $r>0$, with the continuous value at $s=0$, gives
\[
 -W_\rho\ln W_\rho\le-W_\rho\ln G_E-W_\rho+G_E.
\]
The right side is integrable because $\int xW_\rho=1+2\Tr(\hat N\rho)\le2E+1$. Integration, together with the parity bound, yields
\begin{equation}
 \ln\pi\le h(W_\rho)\le1+\ln\!\left[\pi(2E+1)\right].
 \label{eq:energy-entropy-finite}
\end{equation}
Thus every entropy in Eq.~\eqref{eq:optimal-deficit} is finite, without a Fock-support cutoff.

Let $r_*\in(1/3,3/8)$ be the unique solution of
\begin{equation}
 r_*\ln\frac{2e}{r_*}=1,
 \label{eq:critical-root}
\end{equation}
and set
\begin{equation}
 b_* =\ln(2/r_*)=r_*^{-1}-1,\qquad
 c_* =\frac{3/2-4r_*/3}{b_*},\qquad
 \gamma=1-r_*\ln2,\qquad \beta=1-c_*\ln2.
 \label{eq:sharp-constants}
\end{equation}
Numerically,
\begin{equation}
 r_*\simeq0.3733646177,\qquad c_*\simeq0.5971235482,\qquad
 \gamma\simeq0.7412033679,\qquad\beta\simeq0.5861054961.
\end{equation}
The constants in Eq.~\eqref{eq:sharp-constants}, rather than these decimal approximations, define the exponents below.

\begin{theorem}
\label{thm:sharp-energy}
There are positive constants $C_1,C_2,E_0$ such that, for every $0<E<E_0$,
\begin{equation}
 C_1\frac{E^\gamma}{[\ln(1/E)]^\beta}
 \le\mathcal D(E)\le
 C_2\frac{E^\gamma}{[\ln(1/E)]^\beta}.
 \label{eq:sharp-energy}
\end{equation}
The lower bound is realised by physical states with strictly positive Wigner functions and mean photon number at most $E$. The upper bound applies to the full class in Eq.~\eqref{eq:optimal-deficit}, including states of infinite Fock support.
\end{theorem}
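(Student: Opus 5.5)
Both bounds start from the balance identity \eqref{eq:central-balance}, which turns the deficit into $\KL(W_\rho\Vert W_0)-2E$. For the upper bound we must show $\KL(W_\rho\Vert W_0)\le 2E+C_2E^\gamma L^{-\beta}$ for every Wigner-nonnegative $\rho$ with $\Tr(\hat N\rho)\le E$. For the lower bound we must exhibit strictly positive states attaining this excess.

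\textbf{Upper bound.} Write $W_\rho=W_0\,g$. Because $h$ is concave and $W_0$ is rotation invariant, phase averaging does not increase the deficit. The angular Fourier modes of $g$ are then the vacuum--Fock coherences $\rho_{0n}$ together with higher-order pieces.

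We expand $\Phi(g)=g\ln g-g+1$ and use $\Phi(1+u)\le u^2/2$ where $u$ is small, with an entropy-power estimate on the region where $g$ is large. The quadratic term $\tfrac12\int W_0u^2$ is controlled by $\sum_n 2^n|\rho_{0n}|^2$ together with population terms, in analogy with \eqref{eq:doublet-moments2}. The diagonal part of this expansion is absorbed by $2E$, up to lower-order corrections.

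The coherence sum is then split at a level $s$:
\begin{itemize}
\item For $n\le s$, positivity of $\rho$ gives $|\rho_{0n}|^2\le\rho_{nn}$, hence $\sum_{n\le s}2^n|\rho_{0n}|^2\lesssim 2^sE/s$.
\item For $n>s$, we use Wigner nonnegativity. The coherence term $2\rho_{0n}Q_n\cos n\theta$ cannot exceed the radial part. At radius $x\sim R=L+\tfrac43\ln L$, the radial part is only about $E$-small relative to $e^{-x}$. A Markov-type bound in $x$ combined with the moment $\int Q_n^2W_0$ then gives a contribution of order $s!/R^s$.
\end{itemize}
Choosing $s=r_*L+c_*\ln L$ balances the two pieces, and \eqref{eq:critical-root} with Stirling's formula produces $E^\gamma L^{-\beta}$.

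\textbf{Lower bound.} We take a phase-averaged displaced squeezed state at level $n=\lfloor r_*L+c_*\ln L-K\rfloor$ and repair its remote negative tail by a thermal admixture, as in Proposition~\ref{prop:repair}. The steps are:
\begin{itemize}
\item Count all displacement, squeezing and mixing energy, so that the total is at most $E$.
\item Compute the KL excess from the quadratic coherence term, of size about $\lambda^22^nt^2$.
\item Verify strict positivity with the square-completion argument used in Theorem~\ref{thm:general-doublet}.
\item Show the thermal weight is exponentially negligible, since the negative region sits at $x\gtrsim R$.
\end{itemize}
Substituting the parameters reproduces the matching rate.

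\textbf{Main obstacle.} The hardest step is the high-level positivity estimate in the upper bound. It must hold for arbitrary states of infinite Fock support, with interacting coherences $\rho_{jk}$ where $j,k>0$, not just vacuum--Fock pairs. It must also be sharp up to the $\ln L$ corrections, which is where $R=L+\tfrac43\ln L$ enters.

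My first attempt would test $W_\rho\ge0$ against the positive weight $x^{n/2}\cos(n\theta)\,\mathbf 1_{x\le R}$. This isolates $\rho_{0n}$ and bounds the remaining coherences through Cauchy--Schwarz with populations, which are themselves controlled by $E$. A secondary difficulty is controlling the non-quadratic part of $\Phi$ where $u$ is not small. There I would use $g\le 2$, from the parity bound $W\le1/\pi$, together with the concentration of $u$ near $x\sim R$.
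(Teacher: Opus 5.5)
\textbf{Upper bound.} Your skeleton is right, but the step that produces $R=L+\frac43\ln L$ is missing, and the mechanism you propose cannot supply it.

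Positivity on the circle $x=R$ gives $|v_j|\sqrt{2^j/j!}\,R^{j/2}\le 1+e^R(|a_0(R)|+|a_j(R)|)$. Here $a_j(R)$ are the angular Fourier coefficients of $\pi W_A$ for the excited block $A=Q\rho Q$. Cauchy--Schwarz with populations, with the displaced matrix elements bounded by one, only gives $|a_j(R)|\le\delta\le E$. At the enlarged radius this leaves a factor $Ee^R=L^{4/3}$. You are therefore forced back to $R=L$, which yields only $E^\gamma L^{-\beta_0}$ with $\beta_0=1-3\ln2/(2b_*)\approx0.38<\beta$, so the two bounds would not meet.

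The missing ingredient is Lemma~\ref{lem:matrix-elements}: $|\langle k|D(\sqrt{2R})|k+j\rangle|\le CR^{-4/3}\sqrt{k(k+j)}$ uniformly for $j\le 2R/5$. Its proof uses exponential smallness for $k\le R/10$. For $k\ge R/10$ it uses an Airy-scale $R^{-1/3}$ oscillatory bound from a degenerate cubic phase, divided by $\sqrt{k(k+j)}\ge R/10$. Combined with $\sum_k\sqrt{k(k+j)}|A_{k,k+j}|\le\Tr(\hat N A)\le E$, it gives $|a_j(R)|\le CER^{-4/3}$, exactly what $R=L+\frac43\ln L$ requires. Because the lemma stops at $j\le 2R/5$, the top harmonics $j>2L/5$ need the separate radius-$L$ bound~\eqref{eq:old-coefficient}.

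Three smaller points also fail:
\begin{itemize}
\item Phase averaging lowers the deficit, so it runs the wrong way for an upper bound, and it erases the coherences you then analyse.
\item The parity bound gives only $g\le e^x$, not $g\le 2$.
\item $\Phi(1+u)\le u^2/2$ is false for $u<0$.
\end{itemize}
The paper instead uses $\Phi(g)\le(g-1)^2$ on the disk $x\le L-1$. It drops the exterior, whose entropy is nonnegative since $W_\rho\le1/\pi$. That truncation is also what makes $\sum_j R^{-j}\int_0^{L-1}e^{-x}x^j\dd x$ a convergent geometric series. With the full moments $\int W_0Q_j^2=2^j$ you would face the divergent $\sum_j j!\,R^{-j}$.

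\textbf{Lower bound.} Your construction conflates core and repair, and a thermal repair does not fit the energy budget. In the paper the core is $|0\rangle+\sqrt a|n\rangle$ with $a=E/(4n)$. The repair is the phase-averaged displaced squeezed state $\chi_R$ with squeezing $v=R^{-1/3}$.

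The core's coherence reaches order one at $x=R+O(1)$, so the repair must supply density of order $e^{-R}$ there. A thermal state of mean $N$ has $\pi W_\tau(R)=e^{-R/(2N+1)}/(2N+1)\le 1/(eR)$. Even at the cheapest temperature, $N\approx R/4$, the cost $\epsilon N$ is of order $R^2e^{-R}$. At $R=L+\frac43\ln L$ this is of order $L^{2/3}E\gg E$. Staying within budget pushes the radius to $L+2\ln L$ and weakens the construction to $E^\gamma L^{-\beta'}$ with $\beta'=1-(3/2-2r_*)\ln2/b_*\approx0.69>\beta$.

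The squeezed ring instead has density of order $R^{-1/3}$ at its radius, by~\eqref{eq:ring-lower}; this is the same Airy scale as in the lemma. Hence $\epsilon=256A_RR^{1/3}e^{-R}$ suffices, at energy cost $O(A_RE)$. What decides the matter is this polynomial prefactor, not the exponential smallness of the weight.

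Two further steps would also fail:
\begin{itemize}
\item The square completion of Theorem~\ref{thm:general-doublet} needs $a\max_x(\kappa Q_n^2-F_n)$ small, but that maximum grows superexponentially in $n$, so it cannot handle $n\sim L$. The paper uses $|F_n|\le e^x$ on $x\le 2n$, $F_n>0$ on $x\ge2n$, and the ring lower bound beyond $R$.
\item A Taylor expansion of $\KL$ is awkward here because $\epsilon W_{\chi_R}/W_0$ is unbounded at large $x$. The paper's Gibbs variational bound with the truncated test function $u$ gives $\KL\ge a2^n/4$ directly.
\end{itemize}
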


The proof is given in Appendix~\ref{app:sharp-energy}. Its two estimates use the same energy budget. Wigner positivity bounds the vacuum--Fock coefficients through the angular Fourier coefficients of the excited-state block. A uniform energy-weighted displacement bound enlarges the usable squared radius from $L$ to $L+\frac43\ln L$. The matching construction repairs a vacuum--Fock superposition using a phase-averaged displaced squeezed state, with the squeezing and displacement energies included. The balance occurs at
\begin{equation}
 n=\left\lfloor r_*L+c_*\ln L-K\right\rfloor
 \label{eq:attaining-level}
\end{equation}
for a sufficiently large fixed $K$.

Equation~\eqref{eq:sharp-energy} determines the power and logarithmic correction up to constant factors. In particular,
\begin{equation}
 \mathcal D(E)\longrightarrow0,\qquad
 \frac{\mathcal D(E)}{E}\longrightarrow\infty.
 \label{eq:energy-consequences}
\end{equation}
For the one-photon family, Eqs.~\eqref{eq:one-energy-cubic} and \eqref{eq:sharp-energy} imply
\begin{equation}
 \frac{1+\ln\pi-h(W_{\omega_t})}{\mathcal D(E_t)}
 =O\!\left(E_t^{3-\gamma}[\ln(1/E_t)]^\beta\right)\longrightarrow0.
 \label{eq:one-suboptimal}
\end{equation}
The sixth-order construction identifies a descent direction missed by the quadratic expansion. The order-optimal family in Eq.~\eqref{eq:attaining-state} instead has Fock level growing according to Eq.~\eqref{eq:attaining-level}.

\section{Alternative integral reconstruction}
\label{app:second}

Equation~(4.5.21) of Barnett and Radmore \cite{BarnettRadmore} gives a second calculation directly from the density operator,
\begin{equation}
W_\alpha(\alpha)=\frac{2}{\pi^2}\int_{\mathbb C}\dd^2\beta\,
\langle\alpha+\beta|\rho|\alpha-\beta\rangle
\exp(\alpha^*\beta-\alpha\beta^*).
\label{eq:alt-integral}
\end{equation}
For the weighted pure component,
\begin{equation}
(1-\eps)\langle\alpha+\beta|\psi\rangle\langle\psi|\alpha-\beta\rangle
=c e^{-R-|\beta|^2}
\left[1+\frac{t}{\sqrt6}(\alpha^*+\beta^*)^3\right]
\left[1+\frac{t}{\sqrt6}(\alpha-\beta)^3\right].
\label{eq:pure-alt}
\end{equation}
All moments follow from
\begin{equation}
\begin{aligned}
G(u,v)&=\int_{\mathbb C}e^{-|\beta|^2+u\beta+v\beta^*}\,\dd^2\beta
=\pi e^{uv},\\[1mm]
\mathcal G[f]&:=\frac{e^R}{\pi}\int_{\mathbb C}
 e^{-|\beta|^2+\alpha^*\beta-\alpha\beta^*}f(\beta,\beta^*)\,\dd^2\beta.
\end{aligned}
\label{eq:generator}
\end{equation}
After division by $G(\alpha^*,-\alpha)=\pi e^{-R}$, the required cubic contractions are
\begin{equation}
\mathcal G[(\alpha-\beta)^3]=(2\alpha)^3,
\qquad
\mathcal G[(\alpha^*+\beta^*)^3]=(2\alpha^*)^3,
\label{eq:cubic-contract}
\end{equation}
and
\begin{equation}
\frac16\mathcal G[(\alpha^*+\beta^*)^3(\alpha-\beta)^3]
=\frac16\sum_{j=0}^{3}\binom{3}{j}^2j!(-1)^j(4R)^{3-j}
=-L_3(4R).
\label{eq:product-contract}
\end{equation}
These terms reproduce the pure-state contribution in Eq.~\eqref{eq:walpha}.

For $\tau$,
\begin{equation}
\langle\gamma|\tau|\delta\rangle
=\frac23\exp\left[-\frac{|\gamma|^2+|\delta|^2}{2}+\frac{\gamma^*\delta}{3}\right].
\label{eq:tau-alt}
\end{equation}
Substitution into Eq.~\eqref{eq:alt-integral} gives
\begin{equation}
\frac{4}{3\pi^2}e^{-2R/3}
\int_{\mathbb C}
\exp\left[-\frac43|\beta|^2+\frac23(\alpha^*\beta-\alpha\beta^*)\right]\dd^2\beta
=\frac{1}{\pi}e^{-R}.
\label{eq:tau-integral}
\end{equation}
Thus the integral reproduces Eq.~\eqref{eq:walpha}, and hence Eq.~\eqref{eq:wigner-final}.

\section{Proof of the sharp low-energy bound}
\label{app:sharp-energy}

Throughout this appendix, $E\downarrow0$, $L=\ln(1/E)$, and $r_*,b_*,c_*,\gamma,\beta$ are defined by Eqs.~\eqref{eq:critical-root}--\eqref{eq:sharp-constants}. Constants denoted by $C$ may increase from one estimate to the next, but do not depend on the admissible state. No loss channel is applied. Here $R$ denotes the squared radius in the $(q,p)$ plane, $q^2+p^2=2|\alpha|^2$, rather than $R=|\alpha|^2$ as in Appendix~\ref{app:fixed}.

\subsection{How the sharp exponents arise}\label{app:exponent-guide}

The two exponents follow from the same balance in the universal bound and the attaining construction. For the vacuum--Fock coefficients $v_j=\langle j|\rho|0\rangle$, energy gives $\sum_{j\ge1}j|v_j|^2\le E$. Global Wigner positivity supplies a second constraint at squared radius $R=L+\frac43\ln L$. Splitting the entropy estimate at a Fock level $s$ leaves two leading scales,
\begin{equation}
U_E(s)=\frac{E2^s}{s},\qquad
V_E(s)=\frac{s!}{R^s}.
\label{eq:balance-scales}
\end{equation}
The first controls the low harmonics using their energy cost; the second controls the intermediate harmonics using positivity. Equations~\eqref{eq:Jlow}--\eqref{eq:Jhigh} give the full bounds and show that the higher harmonics and remaining terms are smaller at the chosen split.

To find that split, set $s=\lfloor rL+d\ln L\rfloor$, with fixed $0<r<2/5$. Stirling's formula gives
\begin{align}
\ln U_E(s)&=-(1-r\ln2)L+(d\ln2-1)\ln L+O(1),\notag\\
\ln V_E(s)&=-r(1-\ln r)L+
\left(\frac12+d\ln r-\frac43r\right)\ln L+O(1).
\label{eq:balance-expansions}
\end{align}
Equating the coefficients of $L$ gives $r\ln(2e/r)=1$, hence $r=r_*$. Equating the coefficients of $\ln L$ then gives
\begin{equation}
d\ln(2/r_*)=\frac32-\frac43r_*,\qquad d=c_*.
\label{eq:balance-logarithm}
\end{equation}
Both scales are therefore of order $E^\gamma L^{-\beta}$, with
$\gamma=1-r_*\ln2$ and $\beta=1-c_*\ln2$. The power comes from the exponential competition; the logarithmic correction records the energy weight, Stirling factor and enlarged positivity radius together.

The construction realises this balance with a growing Fock level and a phase-averaged displaced squeezed repair. Its coherence amplitude at the repair radius obeys the exact identity
\begin{equation}
A_R^2=\frac{E}{n}\frac{(2R)^n}{n!}
=\frac{U_E(n)}{V_E(n)},
\label{eq:balance-physical-amplitude}
\end{equation}
using the parameters in Eq.~\eqref{eq:construction-parameters}. Taking $n=\lfloor r_*L+c_*\ln L-K\rfloor$ with a sufficiently large fixed $K$ makes this amplitude small while preserving the scale $E^\gamma L^{-\beta}$. The repair energy is at most $256A_R R^{4/3}e^{-R}=O(A_RE)$, so it fits within the same total budget. Equations~\eqref{eq:total-energy}--\eqref{eq:attaining-positive} prove energy feasibility and positivity, and the final entropy estimate attains the balanced scale.

\subsection{Vacuum decomposition and a local entropy bound}

Write $Q=I-|0\rangle\langle0|$, $\delta=1-\langle0|\rho|0\rangle$, $A=Q\rho Q$, and $v=Q\rho|0\rangle$. For $E<1$, positivity of the block matrix gives
\begin{equation}
 \rho=(1-\delta)|0\rangle\langle0|+|v\rangle\langle0|+|0\rangle\langle v|+A,
 \qquad A\ge0,\quad\Tr A=\delta\le E,
 \label{eq:energy-block}
\end{equation}
\begin{equation}
 \sum_{j\ge1}j|v_j|^2\le(1-\delta)\Tr(\hat N A)\le E.
 \label{eq:coherence-energy}
\end{equation}
Indeed, the Schur complement gives $A\ge |v\rangle\langle v|/(1-\delta)$; taking the trace against truncated photon-number operators and then increasing the truncation proves Eq.~\eqref{eq:coherence-energy}.

With $z=q+ip$, put
\begin{equation}
 f(z)=\sum_{j\ge1}\overline{v_j}\sqrt{\frac{2^j}{j!}}z^j.
 \label{eq:analytic-f}
\end{equation}
The series converges uniformly on compact sets by Cauchy--Schwarz. The vacuum--Fock matrix elements and the parity bound yield
\begin{equation}
 W_\rho=W_0(1-\delta+2\operatorname{Re}f)+W_A,
 \qquad |W_A|\le\frac\delta\pi.
 \label{eq:wigner-block}
\end{equation}
These identities hold for infinite-support states by trace-norm approximation.
Since $W_\rho\ge0$, on $|z|^2\le L$ we have
\begin{equation}
 \operatorname{Re}f(z)\ge-\frac{1-\delta+\delta e^{|z|^2}}2\ge-1.
\end{equation}
The nonnegative real part of $1+f$ has mean one on $|z|=\sqrt L$. Extracting its nonzero Fourier coefficients gives
\begin{equation}
 |v_j|^2\frac{2^j}{j!}\le4L^{-j},\qquad j\ge1.
 \label{eq:old-coefficient}
\end{equation}

Set $U=L-1$ and $g=W_\rho/W_0$. For $\Phi(g)=g\ln g-g+1$, with its continuous value at zero,
\begin{equation}
 0\le\Phi(g)\le(g-1)^2,
\end{equation}
and on the disk $x\le U$,
\begin{equation}
 -W_\rho\ln W_\rho+W_0\ln W_0
 =W_0(g-1)(x+\ln\pi-1)-W_0\Phi(g).
 \label{eq:local-entropy-identity}
\end{equation}
Write $g-1=d+2\operatorname{Re}f$, where $d=W_A/W_0-\delta$. Its nonanalytic part satisfies $|d|\le E(e^x+1)$, while the angular mean of $f$ vanishes. Hence the integral of the first term on the right side of Eq.~\eqref{eq:local-entropy-identity} is at least
\begin{equation}
 -E\left[\frac{U^2}{2}+(\ln\pi-1)U+\ln\pi\right].
\end{equation}
Using $(d+2\operatorname{Re}f)^2\le2d^2+8(\operatorname{Re}f)^2$ and angular Parseval gives
\begin{equation}
 \int_{x\le U}W_0\Phi(g)\,\dd q\,\dd p
 \le2E^2(e^U+2U)+4J_\rho,
 \quad
 J_\rho=\sum_{j\ge1}|v_j|^2\frac{2^j}{j!}\int_0^{L-1}e^{-x}x^j\,\dd x.
 \label{eq:Jrho}
\end{equation}
The exterior entropy of $W_\rho$ is nonnegative because $W_\rho\le1/\pi<1$. The omitted vacuum entropy is $e^{-U}(U+1+\ln\pi)$. Combining these estimates gives, for all sufficiently small $E$,
\begin{equation}
 d_\rho:=1+\ln\pi-h(W_\rho)
 \le4E(L+2)^2+4J_\rho.
 \label{eq:statewise-entropy-bound}
\end{equation}
The constants and the small-$E$ threshold in Eq.~\eqref{eq:statewise-entropy-bound} are independent of $\rho$.

\subsection{A uniform energy-weighted displacement estimate}

\begin{lemma}
\label{lem:matrix-elements}
There are absolute constants $C,R_0>0$ such that for $R\ge R_0$, all integers $k\ge1$, and $0\le j\le2R/5$,
\begin{equation}
 \frac{|\langle k|D(\sqrt{2R})|k+j\rangle|}{\sqrt{k(k+j)}}
 \le C R^{-4/3}.
 \label{eq:matrix-uniform}
\end{equation}
\end{lemma}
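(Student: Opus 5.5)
The plan is to use the explicit Laguerre form of the displacement matrix elements and split the range of $k$ into three regimes according to the classical (WKB) picture. With $\mu=|\alpha|^2$ the squared displacement amplitude in the $\alpha$-convention of Appendix~\ref{app:fixed} (so $\mu$ is a fixed multiple of $R$; $\mu=R$ if $D(\sqrt{2R})$ is read with $\alpha=\sqrt{2R}$ in $(q,p)$ units), we have
\[
|\langle k|D(\alpha)|k+j\rangle|=\sqrt{\tfrac{k!}{(k+j)!}}\,\mu^{j/2}e^{-\mu/2}\,|L_k^{(j)}(\mu)|.
\]
Geometrically, the transition $k\to k+j$ under a shift of length $\sqrt\mu$ is classically allowed only when $|\sqrt{k+j}-\sqrt k|\le\sqrt\mu\le\sqrt{k+j}+\sqrt k$. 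The target bound asks for $|\langle k|D|k+j\rangle|\le C\sqrt{k(k+j)}\,R^{-4/3}$, and I would verify it separately in the forbidden, large-$k$ and turning-point/oscillatory regimes.

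\emph{Large $k$.} If $k\ge R^{4/3}$, unitarity gives $|\langle k|D|k+j\rangle|\le1\le\sqrt{k(k+j)}R^{-4/3}$, so there is nothing to prove.

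\emph{Small $k$ (forbidden region).} Fix a small absolute $c>0$. For $1\le k\le cR$ and $j\le 2R/5$ we have $\sqrt k+\sqrt{k+j}\le(\sqrt c+\sqrt{c+2/5})\sqrt R<\sqrt\mu$, so the transition is classically forbidden. Here I would avoid asymptotics altogether. The crude bound $|L_k^{(j)}(\mu)|\le\binom{k+j}{k}\mu^k/k!\cdot(1+o(1))$, valid since $\mu\gg k$, gives an amplitude of order $\mu^{k+j/2}e^{-\mu/2}/\sqrt{k!(k+j)!}$. Stirling then bounds its logarithm by $-\kappa R$ for some $\kappa>0$ uniformly in this range; the key input is that the Poisson weight $\mu^{m}e^{-\mu}/m!$ is exponentially small for $m\le(2/5+2c)\mu$. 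Hence the amplitude is $O(e^{-\kappa R})\le R^{-4/3}$ for $R\ge R_0$.

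\emph{Intermediate $k$.} For $cR\le k\le R^{4/3}$ we have $\sqrt{k(k+j)}R^{-4/3}\ge cR^{-1/3}$, so it suffices to prove a uniform bound $|\langle k|D(\alpha)|k+j\rangle|\le C'\mu^{-1/3}$ in this range. I expect this to be the main step. I would take a uniform Airy-type envelope for Laguerre functions (Krasikov's inequality for $e^{-x/2}x^{a/2}L_n^{(a)}(x)\sqrt{n!/(n+a)!}$, equivalently for Charlier polynomials), which bounds the normalised Laguerre function by the WKB amplitude $[(x-x_-)(x_+-x)]^{-1/4}$ in the oscillatory zone, by an $O(\text{scale}^{-1/3})$ Airy value near the turning points, and by exponential decay outside. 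Applied with $x=\mu$, $n=k$, $a=j$, the turning points are $x_\pm=(\sqrt{k+j}\pm\sqrt k)^2$. Since $\mu\asymp R$ and $k\ge cR$, every local length scale is $\gtrsim R$. This gives amplitude $\lesssim R^{-1/2}$ in the bulk and $\lesssim R^{-1/3}$ at the turning points, in both cases uniformly in $j$ (the case $j$ small corresponds to the Bessel-like regime $J_j(2\sqrt{\mu k})=O((\mu k)^{-1/4})$).

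The delicate point is uniformity as $k$ ranges up to $R^{4/3}$ with $j$ up to $2R/5$, since the turning-point width can depend on both parameters. If a clean citation is unavailable, I would instead bound the amplitude by its $\ell^2$-in-$k$ local average: the rows of $D$ are unit vectors, and the three-term recurrence for $L_k^{(j)}$ (equivalently, discrete WKB for the tridiagonal matrix $\hat a+\hat a^\dagger$) controls the variation in $k$. Combining the three regimes then gives Eq.~\eqref{eq:matrix-uniform}, with $C$ and $R_0$ absolute.
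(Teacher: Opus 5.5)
Your skeleton matches the paper's. Take $T=|\alpha|^2=2R$; the paper uses $D(\alpha)=e^{\alpha\hat a^\dagger-\alpha^*\hat a}$, so your $\mu$ is $2R$. The range $k\le cT$ is exponentially small. For $k\ge cT$ it suffices to prove $|\langle k|D(\sqrt T)|k+j\rangle|\le CT^{-1/3}$ uniformly and then divide by $\sqrt{k(k+j)}\ge k$. Your unitarity shortcut for $k\ge R^{4/3}$ is correct, though the paper does not need it.

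The gap is that this uniform $T^{-1/3}$ bound is the entire content of the lemma, and you do not establish it. The hard configuration is $k\asymp T$ with $\sqrt T\approx\sqrt k+\sqrt{k+j}$, the upper turning point. The lower one is never reached, since $\sqrt{k+j}-\sqrt k\le\sqrt j\le\sqrt{T/5}$. At the upper turning point the WKB envelope degenerates, and an Airy-scale estimate is needed uniformly for $0\le j\le T/5$.

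Naming ``Krasikov's inequality'' without stating it does not settle this. You must quote a precise theorem and check that its hypotheses cover the whole range with absolute constants, in particular small $j$ including $j=0$.

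Your fallback cannot supply the missing estimate. Unitarity controls $\ell^2$ sums along a row or a column. If you average in $k$ at fixed $j$, you are summing along a diagonal, which unitarity does not control. Even along a column, mass at most one over an Airy window of length $\sim T^{1/3}$ gives only $|M|\lesssim T^{-1/6}$. That yields a ratio of order $R^{-7/6}$, not $R^{-4/3}$. Reaching $T^{-1/3}$ requires showing that the local mass near the turning point is $O(T^{-1/3})$, which is the turning-point analysis itself.

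Your route would close with a precise global bound such as $\max_{x\ge0}x^{j+1}e^{-x}\frac{k!}{(k+j)!}L_k^{(j)}(x)^2\le C_Ak^{1/3}$ for $0\le j\le Ak$. Evaluated at $x=T$, this gives $|M|\le Ck^{1/6}T^{-1/2}$. That already suffices for every $k\ge cT$, because there $j\le k/(5c)$, so even the $k\ge R^{4/3}$ case becomes unnecessary. Uniform Laguerre bounds of this Airy type exist in the literature, but one must be cited exactly.

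The paper instead proves the estimate from scratch:
\begin{itemize}
\item It writes the matrix element as an integral over the Bargmann circle $|z|=\sqrt k$.
\item Concavity of the log-amplitude in $\cos\theta$ gives the envelope $a(\theta)\le\sqrt5\exp[-a_0T(\cos\theta-u_0)^2]$, with bounded variation.
\item Away from $u_0=1$ it uses nondegenerate stationary phase or exponential smallness.
\item At the coalescence $u_0=1$, which is your turning point, it uses $\varphi'''\ge1/4$ and a van der Corput argument to get $T^{-1/3}$.
\end{itemize}

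One smaller error concerns the forbidden range. The bound $|L_k^{(j)}(\mu)|\le\binom{k+j}{k}\mu^k/k!\,(1+o(1))$ is false there, because $k\le c\mu$ is not $o(\mu)$. For $j=0$ and $k=c\mu$, the lower-order terms contribute a factor $e^{\Theta(c^2\mu)}$. The correct bound, which the paper uses, is $|L_k^{(j)}(\mu)|\le\frac{\mu^k}{k!}e^{k(k+j)/\mu}$. Its extra factor is absorbed by the Poisson decay when $c$ is small, so this step is easily repaired.
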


\begin{proof}
Write $T=2R$ and $\ell=k+j$. The exact matrix element is
\begin{equation}
 M_{k,\ell}(T):=|\langle k|D(\sqrt T)|\ell\rangle|
 =e^{-T/2}\sqrt{\frac{k!}{\ell!}}T^{j/2}|L_k^{(j)}(T)|.
 \label{eq:associated-Laguerre}
\end{equation}
First suppose $k\le T/20$. Then $\ell\le T/4$. Comparing each term of the finite Laguerre expansion with its leading term gives
\begin{equation}
 |L_k^{(j)}(T)|\le\frac{T^k}{k!}\exp(k\ell/T).
\end{equation}
Indeed, the absolute ratio for a descent of $h$ degrees is $(k)_h(\ell)_h/(h!T^h)\le(k\ell/T)^h/h!$. From $m!\ge(m/e)^m$,
\begin{equation}
 \ln M_{k,\ell}(T)\le-\frac T2+\frac k2\ln\frac{eT}{k}
 +\frac\ell2\ln\frac{eT}{\ell}+\frac{k\ell}{T}
 <-\frac{3T}{40}.
 \label{eq:small-k}
\end{equation}
For the last comparison use that $u\ln(e/u)$ increases on $(0,1)$, together with $\ln20<3$ and $\ln4<7/5$. This exponential estimate proves Eq.~\eqref{eq:matrix-uniform} in the small-$k$ range.

For $k\ge T/20$, use the Bargmann representation. Displacement maps $z^\ell/\sqrt{\ell!}$ to
\begin{equation}
 e^{-T/2+\sqrt Tz}\frac{(z-\sqrt T)^\ell}{\sqrt{\ell!}}.
\end{equation}
Extract its $z^k$ coefficient on $|z|=\sqrt k$ and multiply by $\sqrt{k!}$. The resulting integral has the form
\begin{equation}
 \langle k|D(\sqrt T)|\ell\rangle
 =\frac1{2\pi}\int_{-\pi}^{\pi}a(\theta)e^{iT\varphi(\theta)}\,\dd\theta.
 \label{eq:bargmann-circle}
\end{equation}
Put $u=\cos\theta$, $d(u)=k+T-2\sqrt{kT}u$, and $u_0=(T-j)/(2\sqrt{kT})$. The logarithm of the amplitude is
\begin{equation}
 F(u)=-\frac T2+\sqrt{kT}u+\frac\ell2\ln d(u)
       -\frac k2\ln k+\frac12\ln\frac{k!}{\ell!}.
 \label{eq:amplitude-log}
\end{equation}
Its extended stationary point satisfies $d(u_0)=\ell$. The two-sided factorial bounds
$\sqrt m(m/e)^m\le m!\le5\sqrt m(m/e)^m$ show that
\begin{equation}
 e^{F(u_0)}\le\sqrt5(k/\ell)^{1/4}\le\sqrt5.
\end{equation}
For points between $u\in[-1,1]$ and $u_0$, the affine function $d$ is positive and at most $(\sqrt k+\sqrt T)^2$, since $\ell\le k+T$. Thus
\begin{equation}
 F''(u)=-\frac{2\ell kT}{d(u)^2}
 \le-\frac{2T}{(1+\sqrt{20})^4}.
\end{equation}
With $a_0=(1+\sqrt{20})^{-4}$, Taylor's formula gives
\begin{equation}
 a(\theta)\le\sqrt5\exp[-a_0T(\cos\theta-u_0)^2].
 \label{eq:amplitude-envelope}
\end{equation}
The bound extends continuously through a zero of $d$. Strict concavity of $F$ implies that $a$ has at most one maximum on each half-circle, so its total variation on every subinterval of a half-circle is at most $2\sqrt5$.

For a fixed $\delta_0>0$, if $u_0\le1-\delta_0$, then on $|\cos\theta-u_0|\le\delta_0/2$ the absolute value of $\sin\theta$ has a uniform positive lower bound. Changing variables to $u$ in Eq.~\eqref{eq:amplitude-envelope} gives $O(T^{-1/2})$; the complementary part is exponentially small. If $u_0\ge1+\delta_0$, the full integral is exponentially small.

It remains to choose $\delta_0$ and estimate $|u_0-1|\le\delta_0$. Let $p=k/T$ and $\omega=j/T\in[0,1/5]$. In this region, initially choose $|u_0-1|\le1/10$. Then $4/11\le\sqrt p\le5/9$, so the continuous branch
\[
 \arg(\sqrt p e^{i\theta}-1)=\pi-\arctan\!\left(\frac{\sqrt p\sin\theta}{1-\sqrt p\cos\theta}\right)
\]
has denominator at least $4/9$. It gives the phase
\begin{equation}
 \varphi(\theta;p,\omega)=\sqrt p\sin\theta
 +(p+\omega)\arg(\sqrt p e^{i\theta}-1)-p\theta.
\end{equation}
Its first derivative is $H(\cos\theta)$, where
\begin{equation}
 H(u)=\sqrt p\,u+(p+\omega)\frac{p-\sqrt p\,u}{p+1-2\sqrt p\,u}-p
 =\frac{-2p(u-u_0)(u-\sqrt p)}{p+1-2\sqrt p\,u}.
\end{equation}
On $u_0=1$, $\sqrt p=(1-\omega)/2$, and therefore
\begin{equation}
 \varphi'''(0;p,\omega)=\frac{(1-\omega)^2}{1+\omega}\ge\frac8{15}.
 \label{eq:cubic-phase-derivative}
\end{equation}
Compactness of this parameter box gives constants $\theta_0,\delta_0>0$, independent of $k,j,T$, such that $\varphi'''\ge1/4$ for $|\theta|\le\theta_0$ and $|u_0-1|\le\delta_0$. Reduce $\delta_0$ further so that $\delta_0<(1-\cos\theta_0)/2$.

To obtain the oscillatory bound, first consider an interval on which $|\varphi''|\ge\mu>0$. The derivative $\varphi'$ is monotone, and the set $|\varphi'|\le\sqrt{\mu/T}$ has length at most $2(T\mu)^{-1/2}$. On each of the at most two remaining intervals, integration by parts bounds the integral of $e^{iT\varphi}$ by $C(T\mu)^{-1/2}$: the boundary terms are bounded by $2/(T\sqrt{\mu/T})$, and the remaining term is bounded by $T^{-1}\operatorname{Var}(1/\varphi')\le C/(T\sqrt{\mu/T})$. This proves the second-derivative estimate, also on any subinterval.

On $|\theta|\le\theta_0$, remove the set $|\varphi''|\le T^{-1/3}$. Its length is $O(T^{-1/3})$ because $\varphi'''\ge1/4$. The complement has at most two intervals; the preceding estimate with $\mu=T^{-1/3}$ gives $O(T^{-1/3})$ on each. The same argument holds on every subinterval of $[-\theta_0,\theta_0]$. Stieltjes integration by parts therefore gives the bound $CT^{-1/3}(\sup a+\operatorname{Var}(a))$ for the weighted integral. The amplitude factor is bounded uniformly by Eq.~\eqref{eq:amplitude-envelope} and the variation estimate above. On $|\theta|>\theta_0$, Eq.~\eqref{eq:amplitude-envelope} is exponentially small by the choice of $\delta_0$. Consequently $M_{k,\ell}(T)\le CT^{-1/3}$ for all $k\ge T/20$. Dividing by $\sqrt{k\ell}\ge k\ge T/20$ and combining with Eq.~\eqref{eq:small-k} proves the lemma.
\end{proof}

\subsection{Fourier coefficients and the universal upper bound}

For the excited block $A$ in Eq.~\eqref{eq:energy-block}, define its circular Fourier coefficients by
\[
 a_j(R)=\frac{1}{2\pi}\int_0^{2\pi}
 \pi W_A(\sqrt R\cos\theta,\sqrt R\sin\theta)e^{-ij\theta}\,\dd\theta.
\]
Lemma~\ref{lem:matrix-elements} and $|A_{k,k+j}|\le\sqrt{A_{kk}A_{k+j,k+j}}$ give
\begin{align}
 |a_j(R)|&\le CR^{-4/3}\sum_{k\ge1}\sqrt{k(k+j)}|A_{k,k+j}|\notag\\
 &\le CE R^{-4/3},\qquad 0\le j\le2R/5.
 \label{eq:Fourier-energy}
\end{align}
For $j=0$, this uses $\Tr(\hat N A)\le E$ directly; for $j>0$, apply Cauchy--Schwarz to the energy-weighted diagonal entries. The sums are absolutely convergent. Trace-norm convergence of finite Fock compressions and the parity bound justify the Fourier identities for infinite-support $A$.

The nonnegative circular function $\pi W_\rho$ has each nonzero Fourier coefficient bounded in modulus by its constant coefficient. Equation~\eqref{eq:wigner-block} therefore implies
\begin{equation}
 |v_j|\sqrt{\frac{2^j}{j!}}R^{j/2}
 \le1+e^R\bigl(|a_0(R)|+|a_j(R)|\bigr)
 \le1+2CEe^R R^{-4/3}.
\end{equation}
Take
\begin{equation}
 R=L+\frac43\ln L.
 \label{eq:enlarged-radius}
\end{equation}
Since $Ee^RR^{-4/3}=(L/R)^{4/3}\le1$, there is an absolute constant $M$ for which
\begin{equation}
 |v_j|^2\frac{2^j}{j!}\le M^2R^{-j},\qquad1\le j\le2R/5.
 \label{eq:new-coefficient}
\end{equation}
Higher harmonics retain the separate bound in Eq.~\eqref{eq:old-coefficient}.

Let $s=\lfloor r_*L+c_*\ln L\rfloor$ and $m=\lfloor2L/5\rfloor$. For sufficiently large $L$, $1<s<m<2R/5$. Split $J_\rho$ in Eq.~\eqref{eq:Jrho} into $j<s$, $s\le j\le m$, and $j>m$. The first part satisfies
\begin{equation}
 J_{\mathrm{low}}\le E\max\left(2,\frac{2^{s-1}}{s-1}\right)
 \le2E+CE^\gamma L^{c_*\ln2-1},
 \label{eq:Jlow}
\end{equation}
because $j\mapsto j\ln2-\ln j$ is convex and Eq.~\eqref{eq:coherence-energy} controls the weighted sum.

Using Eq.~\eqref{eq:new-coefficient} and enlarging a sum of positive terms to a geometric series,
\begin{equation}
 J_{\mathrm{mid}}\le M^2R^{-s}\int_0^{L-1}\frac{e^{-x}x^s}{1-x/R}\,\dd x
 \le C\frac{s!}{R^s}.
 \label{eq:Jmid}
\end{equation}
To prove the last estimate uniformly, use $(1-x/R)^{-1}\le4$ on $x\le3L/4$ and at most $2L$ on the remaining interval. For $X$ with the gamma distribution of shape $s+1$ and scale one,
\begin{equation}
 \Pr(X\ge3L/4)\le2^{s+1}e^{-3L/8}
 \le2e^{-19L/200},
\end{equation}
where $s\le2L/5$ eventually and $\ln2<7/10$. Hence the integral in Eq.~\eqref{eq:Jmid} is at most $5s!$ for sufficiently large $L$.

Stirling's formula, including its square-root factor, now gives
\begin{align}
 \ln\frac{s!}{R^s}
 &=-\gamma L+\left(\frac12+c_*\ln r_*-\frac43r_*\right)\ln L+O(1)\notag\\
 &=-\gamma L-\beta\ln L+O(1).
 \label{eq:stirling-upper}
\end{align}
Here $r_*(1-\ln r_*)=\gamma$ and $c_*b_*+4r_*/3=3/2$. The second-order expansion error is $O((\ln L)^2/L)$; the integer part contributes only a bounded term.

For the last frequency range, Eq.~\eqref{eq:old-coefficient} gives
\begin{equation}
 J_{\mathrm{high}}\le4L\frac{(m+1)!}{L^{m+1}}
 =O\!\left(L^{3/2}E^{g_*}\right),\qquad
 g_*:=\frac25\left[1-\ln(2/5)\right]>\gamma.
 \label{eq:Jhigh}
\end{equation}
The strict inequality follows from the increase of $u(1-\ln u)$ on $[r_*,2/5]$. Since $\gamma<1$, both this term and $E(L+2)^2$ are $o(E^\gamma L^{-\beta})$. Equations~\eqref{eq:statewise-entropy-bound} and \eqref{eq:Jlow}--\eqref{eq:Jhigh} give one bound, uniform over every admissible $\rho$. Taking the supremum proves the upper half of Eq.~\eqref{eq:sharp-energy}.

\subsection{A positive repair with controlled energy}

For $R\ge1$, set $v=R^{-1/3}$. Let $\chi_R$ be the uniform phase average of a pure Gaussian state centred at $(\sqrt R,0)$ with covariance $\operatorname{diag}(v/2,1/(2v))$. Mixtures of displaced states squeezed along a circle's radius were considered in Ref.~\cite[Appendix 2, Eq.~(19)]{MKC2008v1}. In the present convention,
\begin{equation}
 W_{\chi_R}(x)=\frac1{2\pi^2}\int_0^{2\pi}
 \exp\!\left[-\frac{(\sqrt x\cos\theta-\sqrt R)^2}{v}-vx\sin^2\theta\right]\dd\theta,
 \label{eq:ring-Wigner}
\end{equation}
\begin{equation}
 N_{\chi_R}=\frac{R+(v+v^{-1})/2-1}{2}\le R.
 \label{eq:ring-energy}
\end{equation}
The determinant of the covariance is $1/4$, so each component is a normalised pure Gaussian state; the phase average is a physical, radial, strictly Wigner-positive state.

For $x=R+y$, $y\ge0$, choose $\theta_*$ with $\sqrt x\cos\theta_*=\sqrt R$ and $\sqrt x\sin\theta_*=\sqrt y$. Put $w=R^{1/6}$, $d=\sqrt y$, and $h=[2w(d+w)]^{-1}$. For $|\theta-\theta_*|\le h$, the elementary trigonometric estimates give
\begin{equation}
 \frac{|\sqrt x\cos\theta-\sqrt R|}{\sqrt v}
 \le\frac{d}{2(d+w)}+\frac{w^2}{8(d+w)^2}\le\frac58,
\end{equation}
\begin{equation}
 vx\sin^2\theta\le vy+\frac{d}{d+w}+\frac{w^2}{4(d+w)^2}\le vy+\frac54.
\end{equation}
Also $h\ge R^{-1/3}/[2\sqrt2\sqrt{1+y/R^{1/3}}]$. Integrating over this periodic window of length $2h$ proves the full exterior bound
\begin{equation}
 W_{\chi_R}(R+y)\ge\frac{e^{-2}}{2\sqrt2\pi^2}R^{-1/3}
 \left(1+\frac{y}{R^{1/3}}\right)^{-1/2}e^{-R^{-1/3}y},\qquad y\ge0.
 \label{eq:ring-lower}
\end{equation}

Fix a sufficiently large constant $K$, and for each small $E$ choose
\begin{equation}
 n=\lfloor r_*L+c_*\ln L-K\rfloor,\qquad a=\frac{E}{4n},\qquad
 R=L+\frac43\ln L,\qquad
 A_R=2\sqrt a\sqrt{\frac{2^n}{n!}}R^{n/2}.
 \label{eq:construction-parameters}
\end{equation}
If $\vartheta_L=r_*L+c_*\ln L-K-n\in[0,1)$, Stirling's formula gives
\begin{equation}
 \ln A_R^2=-b_*K-b_*\vartheta_L-\frac32\ln r_*-\frac12\ln(2\pi)+o_K(1).
 \label{eq:construction-amplitude}
\end{equation}
Thus $K$ can be fixed first so that $A_R\le1/4096$ for all sufficiently large $L$. We also have $n\ge3$, $2n<R$, and $n\le2R/5$ eventually. Define
\begin{equation}
 \epsilon=256A_RR^{1/3}e^{-R},\qquad
 |\psi_E\rangle=\frac{|0\rangle+\sqrt a|n\rangle}{\sqrt{1+a}},\qquad
 \rho_E=(1-\epsilon)|\psi_E\rangle\langle\psi_E|+\epsilon\chi_R.
 \label{eq:attaining-state}
\end{equation}
Here $\epsilon\to0$ and $\limsup\epsilon/a\le1024r_*/4096<1$. By Eq.~\eqref{eq:ring-energy}, all displacement and squeezing costs obey
\begin{equation}
 \Tr(\hat N\rho_E)\le na+\epsilon R
 \le\frac E4+256A_RE\left(\frac RL\right)^{4/3}\le\frac E2
 \label{eq:total-energy}
\end{equation}
for sufficiently small $E$.

We next establish strict Wigner positivity. The Fock polynomial satisfies $|F_n(x)|\le e^x$ by the parity bound, and
\begin{equation}
 F_n(x)>0\qquad(x\ge2n).
 \label{eq:Laguerre-tail-positive}
\end{equation}
To see the latter without a zero asymptotic, $(-1)^nn!L_n(y)$ is the characteristic polynomial of the symmetric tridiagonal matrix with diagonal $1,3,\ldots,2n-1$ and off-diagonal entries $1,2,\ldots,n-1$. Its eigenvalues are below $4n$, by the row-sum bound, so its characteristic polynomial is positive for $y\ge4n$.

For $x\le2n$, the numerator of $W_{|\psi_E\rangle\langle\psi_E|}/W_0$ is at least $1-ae^{2n}-A_R>0$, since $ae^{2n}\to0$ and $2r_*<1$. For $2n<x\le R$, Eq.~\eqref{eq:Laguerre-tail-positive} and $A_R<1$ again give positivity. For $x=R+y\ge R$, the negative part can only come from the off-diagonal term. Relative to Eq.~\eqref{eq:ring-lower}, its modulus is at most
\begin{equation}
 \frac{2\sqrt2\pi e^2}{1+a}A_R R^{1/3}e^{-R}
 \left(1+\frac yR\right)^{n/2}
 \left(1+\frac{y}{R^{1/3}}\right)^{1/2}e^{-(1-R^{-1/3})y}.
 \label{eq:ring-ratio}
\end{equation}
Using $\ln(1+u)\le u$, the logarithm of its $y$-dependent factor is at most
\begin{equation}
 \left[\frac{n}{2R}+\frac32R^{-1/3}-1\right]y\le-\frac{27}{40}y,
\end{equation}
for $R\ge2048$. Since $2\sqrt2\pi e^2<108<128$, Eq.~\eqref{eq:attaining-state} therefore gives, both inside and outside the circle,
\begin{equation}
 W_{\rho_E}>\frac\epsilon2 W_{\chi_R}>0.
 \label{eq:attaining-positive}
\end{equation}
For each fixed $E$, the polynomial-Gaussian part and the compact phase average in Eq.~\eqref{eq:ring-Wigner} have Gaussian tail bounds. Together with Eq.~\eqref{eq:attaining-positive}, these ensure finite entropy and finite relative entropy to $W_0$.

\subsection{Entropy of the attaining family}

Put $c_0=(1-\epsilon)/(1+a)$, which is at least $15/16$ for all sufficiently small $E$, and use the bounded test function
\begin{equation}
 u(q,p)=2\sqrt a\sqrt{\frac{2^n}{n!}}x^{n/2}\cos(n\theta)\,\boldsymbol{1}_{\{x\le2n\}},
 \qquad |u|\le A_R\le\frac14.
 \label{eq:bounded-test}
\end{equation}
All terms in $W_{\rho_E}$ except the vacuum--$n$ coherence are radial. Angular orthogonality gives
\begin{equation}
 \int uW_{\rho_E}=c_0V_0,\qquad V_0=\int u^2W_0.
\end{equation}
A shift of $\theta$ by $\pi/n$ changes $u$ to $-u$, hence $\int e^uW_0=\int\cosh(u)W_0$. For $|u|\le1/4$, $\cosh u-1\le(8/15)u^2$, and thus
\begin{equation}
 \ln\int e^uW_0\le\frac8{15}V_0.
\end{equation}
The variational relative-entropy inequality follows by applying nonnegativity of relative entropy to the density $e^uW_0/\int e^uW_0$. It yields
\begin{equation}
 D_{\mathrm{KL}}(W_{\rho_E}\|W_0)
 \ge\left(c_0-\frac8{15}\right)V_0\ge\frac38V_0.
 \label{eq:Gibbs-lower}
\end{equation}
Direct angular integration gives
\begin{equation}
 V_0=2a2^n\frac1{n!}\int_0^{2n}e^{-x}x^n\,\dd x\ge\frac23a2^n.
\end{equation}
The last inequality follows from Markov's inequality for a gamma variable with mean $n+1$: its probability of being at most $2n$ is at least $(n-1)/(2n)\ge1/3$. Consequently $D_{\mathrm{KL}}\ge a2^n/4$. By Eq.~\eqref{eq:total-energy},
\begin{equation}
 h(W_{\rho_E})-(1+\ln\pi)
 =2\Tr(\hat N\rho_E)-D_{\mathrm{KL}}(W_{\rho_E}\|W_0)
 \le E-\frac{a2^n}{4}\le-\frac{a2^n}{8},
\end{equation}
where the last comparison holds when $2^n\ge32n$. This condition is eventually satisfied because $n/L\to r_*>0$. Finally,
\begin{equation}
 \frac{a2^n}{8}=\frac{E2^n}{32n}
 \ge C_1E^\gamma L^{c_*\ln2-1}
 =C_1E^\gamma L^{-\beta}
\end{equation}
for a positive constant $C_1$. All requirements hold for every sufficiently small $E$, with $K$ fixed first. This proves the lower half of Eq.~\eqref{eq:sharp-energy} and completes Theorem~\ref{thm:sharp-energy}.

\section{Derivation of the sharp Shannon loss threshold}
\label{app:loss}

We include the short proof from the companion stability theory \cite{HeStability} in the notation of the present counterexamples.
At half transmission,
\[
W_{\mathcal L_{1/2}^{\otimes m}(\rho)}(q,p)
=\pi^{-m}\langle q+ip|\rho|q+ip\rangle,
\]
where the ket is a multimode coherent state with amplitude vector $q+ip$.
The Wehrl inequality gives entropy at least $m(1+\ln\pi)$ \cite{Wehrl}. For $\eta<1/2$, apply the same result to $\mathcal L_{2\eta}^{\otimes m}(\rho)$. The vacuum saturates the bound. This argument applies to every density operator; under finite input energy the entropies are finite.

Now fix $\eta>1/2$ and $0<\lambda<1$. For each fixed Fock level $n$, take $\sigma_{n,t,\lambda}$ from Eq.~\eqref{eq:doublet-state} and put $a=t^2$. Theorem~\ref{thm:general-doublet} ensures strict input Wigner positivity for all sufficiently small $a$. Pure loss preserves Wigner positivity. The output diagonal Wigner polynomial is
\[
P_{n,\eta}(x)=\sum_{j=0}^n\binom nj(1-2\eta)^{n-j}
\frac{(2\eta x)^j}{j!},
\]
and the vacuum--$n$ coherence is multiplied by $\eta^{n/2}$.
Write $W_{\rm out}=W_0(1+u)$, with $\int W_0u=0$.
The global scalar bound
\[
\big|(1+u)\ln(1+u)-u-u^2/2\big|\le C|u|^3,\qquad u\ge-1,
\]
follows by treating $|u|\le1/2$ and its complement separately, using the continuous value at $u=-1$.
At fixed $n,\eta,\lambda$, polynomial Gaussian moments then give
\[
\mathcal D_1(\mathcal L_\eta\sigma_{n,\sqrt a,\lambda})
=a\{\lambda^2(2\eta)^n-2\eta n\}
+O_{n,\eta,\lambda}(a^{3/2}).
\]
The input energy per mode is $e=na/(1+a)$. Choose
$M=\lfloor E/e\rfloor$ independent factors. Their complete energy is at most $E$, and, with $n$ fixed first, their total output deficit tends as $a\downarrow0$ to
\[
E\left\{\frac{\lambda^2(2\eta)^n}{n}-2\eta\right\}.
\]
This limit is unbounded as $n$ grows because $2\eta>1$.
For any desired finite deficit, choose $n$ first and then $a$ small enough to satisfy positivity, energy and remainder bounds simultaneously. Every chosen state has finitely many modes and finite Fock support. This proves the second half of Eq.~\eqref{eq:main-loss}.

\section{An explicit source with residual structure after entropy recovery}
\label{app:residual}

We give the source and witness argument underlying Eq.~\eqref{eq:main-hull-gap}, specialised from Refs.~\cite{HeConcentration,HeStability}.
Fix $0<c<1$, put $p=c/m$, and define
\begin{align}
|\phi_p\rangle&=\sqrt{1-p}|0\rangle+\sqrt p|3\rangle,\notag\\
R_p&=600\exp[-(64p)^{-1/3}/4],\qquad
\epsilon_p=\frac{R_p}{1+R_p},\notag\\
\omega_p&=(1-\epsilon_p)|\phi_p\rangle\langle\phi_p|+\epsilon_p\tau,
\qquad
\widetilde\rho_{m,\eta}=\mathcal L_\eta^{\otimes m}(\omega_p^{\otimes m}),
\label{eq:residual-source}
\end{align}
with the same mean-$1/2$ thermal state $\tau$ as in Eq.~\eqref{eq:main-thermal}.
The complete input energy is
$3c(1-\epsilon_p)+m\epsilon_p/2\to3c$, and
$T(\omega_p^{\otimes m},|\phi_p\rangle\langle\phi_p|^{\otimes m})\le m\epsilon_p\to0$ faster than every inverse power of $m$.

For $p\le1/16$, the angular minimum of the pure Wigner polynomial is
\[
F_{p,\min}(u)=1-2p+6pu-6pu^2+\frac43pu^3
-\frac4{\sqrt3}\sqrt{p(1-p)}u^{3/2}.
\]
On $0\le u\le u_0=(64p)^{-1/3}$, discarding positive terms gives the lower bound
$1-1/8-(3/8)16^{-1/3}-1/(2\sqrt3)>0$.
Everywhere, the absolute polynomial is at most $3(1+u^3)$.
On the remaining region, its negative Wigner part divided by $W_\tau$ is therefore at most
\[
6(1+u^3)e^{-u/2}
\le6[1+(12/e)^3]e^{-u_0/4}<R_p.
\]
The odds $\epsilon_p/(1-\epsilon_p)=R_p$ establish global positivity of $\omega_p$. Products and pure loss preserve it.

Let $|\Phi_m\rangle=m^{-1/2}\sum_{j=1}^m|3_j\rangle$, where $|3_j\rangle$ has three photons in mode $j$ and vacuum elsewhere, and set $P_m=|\Phi_m\rangle\langle\Phi_m|$.
We first prove the Gaussian bound
\begin{equation}
\sup_{\sigma\in\mathcal F_m}\Tr(P_m\sigma)\le\frac{18}{m}.
\label{eq:residual-Gaussian}
\end{equation}
A normalised pure Gaussian state has Bargmann function
$f(z)=\sqrt C\exp(\tfrac12z^TAz+d^Tz)$, where $A=A^T$ and its Takagi singular values $s_j$ lie below one.
Writing $b$ for $d$ in the Takagi basis, Gaussian integration gives
\[
C^{-1}=\prod_j(1-s_j^2)^{-1/2}
\exp\sum_j\left[\frac{(\Re b_j)^2}{1-s_j}
+\frac{(\Im b_j)^2}{1+s_j}\right]
\ge\exp[(\|A\|_{\mathrm{HS}}^2+\|d\|^2)/2].
\]
The three-photon coefficient is
$\langle3_j|f\rangle=\sqrt{C/6}(d_j^3+3A_{jj}d_j)$.
With $x=\|A\|_{\mathrm{HS}}$ and $y=\|d\|$,
\[
|\langle\Phi_m|f\rangle|^2
\le\frac{C}{6m}(y^3+3xy)^2
\le\frac1m\left(\frac{72}{e^3}+\frac{12}{e^2}\right)
<\frac{18}{m}.
\]
Here $\sum_j|d_j|^3\le y^3$, $\sum_j|A_{jj}d_j|\le xy$,
and the two scalar maxima follow from
$C\le e^{-(x^2+y^2)/2}$.
Every mixed Gaussian state is a Gaussian displacement mixture of pure Gaussian states. Linearity and trace-norm continuity of the bounded effect extend the estimate to $\mathcal F_m$, including its closure and unrestricted comparator energy.

For the unrepaired product seed after loss, put
\[
q=1-p[1-(1-\eta)^3],\qquad
b=p\eta^3,\qquad z=\eta^{3/2}\sqrt{p(1-p)}.
\]
Its witness probability is
\[
\Tr(P_m\rho_{\rm seed,out})
=bq^{m-1}+(m-1)|z|^2q^{m-2}
\longrightarrow c\eta^3e^{-c[1-(1-\eta)^3]}.
\]
The physical repair changes this probability by at most $m\epsilon_p$, because loss contracts trace distance. For every $\sigma\in\mathcal F_m$,
\[
T(\widetilde\rho_{m,\eta},\sigma)
\ge\Tr[P_m(\widetilde\rho_{m,\eta}-\sigma)]
\ge bq^{m-1}+(m-1)|z|^2q^{m-2}-m\epsilon_p-\frac{18}{m}.
\]
Taking the infimum and then the lower limit proves Eq.~\eqref{eq:main-hull-gap}.
At $\eta=1/2$, Appendix~\ref{app:loss} applies to this very same physical output, giving entropy recovery alongside the positive limiting structural gap.

\subsection{The concentration theorem and its transfer through loss}
\label{app:retention-interface}

We state the exact Gaussian-processing interface used in Eq.~\eqref{eq:main-retention}. The source is the physical state in Eq.~\eqref{eq:residual-source}, with fixed $0<c<1$. Each attempt receives one complete $m$-mode copy. An allowed protocol uses finitely many modes and operations in each instance: normal Gaussian ancillary states, Gaussian unitaries and channels, Gaussian measurements including ideal homodyne probability instruments, arbitrary measurable classical feedback, randomisation, discarding and success selection. Measurement records lie in standard Borel spaces. The supremum ranges over all finite choices of ancillary energy, squeezing, depth and number of operations.

For its trace-nonincreasing success map $\mathcal I_{\rm s}$, define
\begin{equation}
p_{\rm s}=\Tr\mathcal I_{\rm s}(\widetilde\rho_{m,\eta}),
\qquad
\rho_{\rm out\mid succ}=\frac{\mathcal I_{\rm s}(\widetilde\rho_{m,\eta})}{p_{\rm s}}.
\label{eq:retention-success}
\end{equation}
The successful output has exactly $k$ modes. The threshold $0<p_0\le1$ is fixed independently of $m$ and applies to $p_{\rm s}$ on this actual repaired input. With $\mathcal F_k$ the full trace-norm-closed Gaussian convex hull, set
\begin{equation}
B_{m,\eta}(k,p_0)=\sup_{\mathcal I:\,p_{\rm s}\ge p_0}
d_{\mathcal F_k}(\rho_{\rm out\mid succ}).
\label{eq:retention-B}
\end{equation}

The uniform concentration theorem of Ref.~\cite{HeConcentration}, also proved in Ref.~\cite{HeStability}, supplies constants $C_{c,p_0}$ and $m_0(c,p_0)$ such that
\begin{equation}
B_{m,1}(k,p_0)\le C_{c,p_0}
\left(\sqrt{k/m}+m^{-1/4}\right),
\qquad m\ge m_0,\quad 1\le k\le m.
\label{eq:retention-import}
\end{equation}
This is the imported processing theorem; Appendix~\ref{app:residual} supplies the source and structural witness in the present notation.

For an attenuated input, precompose the same success map with pure loss:
$\mathcal J_{\rm s}=\mathcal I_{\rm s}\circ\mathcal L_\eta^{\otimes m}$.
The composed protocol is Gaussian and acts on $\omega_{c/m}^{\otimes m}$. Its success probability and normalised successful output are exactly those in Eq.~\eqref{eq:retention-success}. Therefore
\begin{equation}
B_{m,\eta}(k,p_0)\le B_{m,1}(k,p_0)
\le C_{c,p_0}\left(\sqrt{k/m}+m^{-1/4}\right).
\label{eq:retention-loss-transfer}
\end{equation}
The same constants work for every $0\le\eta\le1$.

Fix $0<\eta\le1$ and a target
$0<g<c\eta^3e^{-c[1-(1-\eta)^3]}$.
For sufficiently large $m$, $C_{c,p_0}m^{-1/4}\le g/2$, so an output with gap at least $g$ requires
\begin{equation}
k\ge m\left(\frac{g}{2C_{c,p_0}}\right)^2.
\label{eq:retention-mode-lower}
\end{equation}
The identity protocol retains all $m$ modes with probability one and eventually has gap greater than $g$ by Eq.~\eqref{eq:main-hull-gap}. These two bounds establish the $\Theta(m)$ minimum mode count for this source, target gap and fixed actual success threshold. In particular they apply to the same half-loss output whose Shannon entropy has recovered the vacuum bound.

\end{document}